\documentclass[12pt,a4paper]{article}
\usepackage{titling}

\usepackage[T1]{fontenc}
\usepackage[utf8]{inputenc}
\usepackage{lmodern}
\usepackage[margin=2.45cm]{geometry}
\usepackage{amsmath,amssymb,amsthm,mathtools}
\usepackage{enumitem}
\usepackage{microtype}
\usepackage{booktabs}
\usepackage{array}
\usepackage{tikz}
\usetikzlibrary{positioning,calc}
\usepackage{xcolor}
\usepackage{hyperref}

\hypersetup{
  colorlinks=true,
  linkcolor=blue,
  citecolor=blue,
  urlcolor=blue,
  pdftitle={Riesz Energy Subset Selection in the Euclidean Plane Is NP-Hard},
  pdfauthor={Michael Emmerich}
}

\newtheorem{theorem}{Theorem}[section]
\newtheorem{lemma}[theorem]{Lemma}
\newtheorem{proposition}[theorem]{Proposition}

\newtheorem{remark}[theorem]{Remark}

\newcommand{\R}{\mathbb{R}}
\newcommand{\Q}{\mathbb{Q}}
\newcommand{\abs}[1]{\lvert #1\rvert}
\newcommand{\norm}[1]{\lVert #1\rVert}
\newcommand{\RSSP}{\text{\normalfont\scshape Riesz-Subset}}

\newcommand{\sgnspin}{\{-1,+1\}}

\providecommand{\subtitle}[1]{%
  \posttitle{\par\end{center}\vskip 0.5em%
    \begin{center}\large #1\par\end{center}\vskip 0.5em}%
}
\title{Riesz Energy Subset Selection in the Euclidean Plane Is NP-Hard}
\subtitle{A Reduction from the Ising Model on Planar Cubic Graphs}
\author{\parbox{\textwidth}{\centering Michael Emmerich\\[0.75ex]\small Faculty of Information Technology, University of Jyv\"askyl\"a, Finland}}
\date{August 2026}

\begin{document}
\maketitle

\begin{abstract}
We prove that minimum Riesz $s$-energy subset selection in the Euclidean plane is NP-complete already for the fixed exponent $s=2$.  To our knowledge, this is the first Euclidean hardness result for exact Riesz-energy subset selection in which both the ambient dimension and the exponent are fixed. Since the Riesz-Energy minimization is frequently used in low dimensional settings, this result also has practical relevance.   The reduction uses Barahona's planar cubic Ising model with a uniform field.  A spin is encoded by one diagonal of a four-point square.  Axis-aligned selector chains implement ferromagnetic consistency, while a $45^\circ$ terminal geometry yields an antiferromagnetic source interaction.  Rational diagonal perturbations realize the magnetic field, and all remaining interactions are dominated by polynomial separation.  Because $s=2$ and all coordinates are rational, every constructed energy and the decision threshold are rational exactly. 
\end{abstract}

\noindent\textbf{Keywords:} Riesz energy; subset selection; NP-completeness; planar cubic graphs; Ising model; geometric reduction; multipole interactions; computational geometry.

\section{Introduction}\label{sec:intro}

For a finite set $P\subset\R^d$, an integer $k$, and $s>0$, define the Riesz $s$-energy of $S\subseteq P$ by
\[
 E_s(S)=\sum_{\{p,q\}\subset S}\frac1{\norm{p-q}^s}.
\]
The minimum Riesz-energy subset problem asks for a $k$-point subset of minimum energy.  General-metric hardness is known \cite{Pereverdieva2025}; Euclidean hardness in the plane is known when the exponent is part of the input  \cite{EmmerichPereverdievaDeutz2026}, but standard reductions to unit disk problems do not close for a fixed exponent $s$.  The latter work therefore identifies fixed-exponent Euclidean hardness as a principal open case.  The purpose of the present paper is to prove NP-completeness for fixed geometric parameters at
\[
 d=2,\qquad s=2.
\]

\begin{theorem}[Main theorem]\label{thm:main}
The decision problem $\RSSP(2,2)$ is NP-complete.  More precisely, from an instance $(G,K)$ of maximum independent set on a planar cubic graph one can construct in polynomial time a finite rational point set $P\subset\Q^2$, an integer $k$, and a rational threshold $\tau$ such that
\[
 \exists S\subseteq P,\quad |S|=k,\quad E_2(S)\le\tau
\]
if and only if $G$ has an independent set of cardinality at least $K$.
\end{theorem}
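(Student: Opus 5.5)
Membership in NP is routine: a certificate is the subset $S$. Because $s=2$, every term $1/\norm{p-q}^2$ is a rational number of polynomial bit length when coordinates are rational, so $E_2(S)$ can be computed exactly and compared with $\tau$ in polynomial time. The substance is hardness. Following the abstract, I will not reduce directly from independent set. Instead I go through Barahona's result that computing a ground state of the Ising model on planar cubic graphs with a uniform field is NP-hard. Maximum independent set on a planar cubic graph $G=(V,E)$ is exactly the antiferromagnetic Ising problem
\[
 H(\sigma)=J\sum_{uv\in E}\sigma_u\sigma_v - h\sum_{v}\sigma_v,\qquad \sigma\in\sgnspin^V,
\]
with $J>0$ and a uniform field $h$ satisfying $0<h<J$ (chosen, for example, $h=J/2$). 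Writing $x_v=(1+\sigma_v)/2$, the minimum of $H$ is an explicit affine decreasing function of $\max\bigl(|I|\bigr)$ over independent sets $I$, since on cubic graphs the degree terms are constant. So the threshold $K$ translates into a rational threshold for $H$. I first fix a planar straight-line grid embedding of $G$ with polynomially bounded integer coordinates, subdividing each edge into axis-aligned segments.

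Next I build the gadgets. For each vertex I place a spin gadget consisting of four points on a small square of side $\varepsilon$, from which exactly two points must be chosen. The two diagonals have equal internal energy and encode $\sigma_v=\pm1$; choosing two adjacent corners costs an extra $\Theta(\varepsilon^{-2})$ and is never optimal. Along each edge path I place selector chains of such squares at spacing $L\gg\varepsilon$. The leading interaction between two selected pairs is quadrupolar, of order $\varepsilon^4/L^6$ under $s=2$. For axis-aligned neighbors this interaction favors equal diagonals, which gives ferromagnetic propagation along the chain. Near one endpoint of each edge I insert a terminal rotated by $45^\circ$, which flips the sign of the coupling and produces the single antiferromagnetic edge interaction. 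The uniform field comes from moving one diagonal of each vertex square slightly (a rational perturbation $\delta$), which gives that orientation a constant energy advantage proportional to $h$. I set $k$ equal to twice the number of squares and set $\tau$ equal to the exact rational energy of the configuration that corresponds to the target Ising value, plus a slack smaller than the smallest gap between distinct achievable Ising levels.

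The proof is then a chain of lemmas. (i) Every optimal $k$-subset selects exactly two diagonal points from each square; this follows from an exchange argument against the $\varepsilon^{-2}$ penalty. (ii) The energy of any such valid selection equals a constant $C$, plus $\alpha\cdot H'(\sigma)$ for an Ising Hamiltonian $H'$ on the subdivided graph, plus an error $R$. (iii) $|R|$ is smaller than half of $\alpha$ times the Ising gap. For (iii), the interactions between squares that are far apart decay like $\mathrm{dist}^{-6}$, and the interaction between a pair of squares independent of their spins does not matter because it is constant. (iv) Chains are forced to be consistent, because a domain wall costs at least one ferromagnetic coupling, and this exceeds the total error. The chain ground states then reproduce $H$ up to additive and multiplicative constants.

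The main obstacle is step (ii)/(iii): computing the exact sign and magnitude of the spin-dependent part of the pair interaction for $s=2$. This requires expanding $\sum 1/\norm{p-q}^2$ over the $2\times2$ selected pairs to the quadrupole order. The expansion must confirm that the axis-aligned configuration gives a ferromagnetic coupling and the $45^\circ$ configuration gives an antiferromagnetic one. It must also bound the higher-order terms and all non-neighbor interactions polynomially. I would first compute this expansion symbolically, treating both geometries as explicit rational functions of $\varepsilon/L$. I would then choose $\varepsilon$, $\delta$ and $L$ as fixed polynomials in $|V|$, with $\delta$ placed between the scale of the coupling and the scale of the error terms, so that all bounds hold with explicit rational margins.
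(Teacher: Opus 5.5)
Your architecture is the paper's: four-point diagonal selectors, $k$ equal to twice the number of cells, axis-aligned ferromagnetic chains, a $45^\circ$ antiferromagnetic terminal, and a rational diagonal perturbation for the field. However, your error analysis omits the largest spin-dependent term, so steps (iii) and (iv) fail as stated.

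The exact decomposition of a selector pair is $A_R+h_R(\sigma+\tau)+J_R\sigma\tau$. Besides the quadrupole--quadrupole coupling $J_R=O(b^4r^{-6})$ there is a monopole--quadrupole one-spin field $h_R=16\sin(2\theta)\,b^2r^{-4}+O(b^4r^{-6})$ (unit spacing, half-side $b$). It vanishes only for axis-aligned center vectors.
\begin{itemize}
\item At your $45^\circ$ terminal this field is about $4b^2$ on each terminal cell, against an intended coupling of about $28b^4$.
\item The cells flanking every bend and every degree-three vertex see fields of the same order $b^2$.
\item The field summed over all other cells is only bounded by $17Mb^2$.
\end{itemize}
These terms exceed every coupling of your Ising model by a factor of order $b^{-2}$. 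Spins near bends and terminals would therefore be pinned by spurious fields, not by the couplings.

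The missing idea is cellwise compensation. Compute exactly (rationally, since $s=2$) the total unperturbed field $G_i$ at \emph{every} cell. Then shift the $+$ diagonal of every cell by $\eta_i=(g_i-G_i)/\beta$, with $\beta=-\tfrac18b^{-3}$, target $g_i=\lambda$ at roots and $0$ elsewhere. Finally bound the second-order residual and the induced change of all couplings. Your $\delta$ is the right mechanism, but it must be cell-specific, and its size (up to order $Mb^5$) is dictated by $G_i$, not chosen between scales.

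Your source encoding is also wrong as stated. With $x_v=(1+\sigma_v)/2$ you get $H=4Jq-(6J+2h)t+Jm+hn$. The repair ``drop one endpoint of an induced edge'' is non-increasing only if $0<6J+2h\le4J$, i.e.\ $-3J<h\le-J$ in your sign convention; Barahona's $H_B$ is $h=-J$. For $0<h<J$ induced edges are profitable. On the triangular prism with $J=1$, $h=\tfrac12$, the set $\{a_1,a_3,b_2\}$, which induces one edge, has $H=-5$, while every independent set has $H\ge-2$. So the realized field must equal the realized edge coupling up to errors absorbed by the gap; the paper uses the same $\lambda$ for both.

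Finally, step (iii) cannot hold uniformly over all valid selections. Same-tree couplings beyond nearest neighbours (distance-two neighbours, sector pairs at bends and at the root) are of the same order $b^4$ as the intended couplings, and they are constant only on consistent configurations. Consistency therefore needs a quantitative component-flip argument, as in the paper:
\begin{itemize}
\item Flipping the leaf-side component of a defective edge gains more than $159b^4$ on that edge.
\item This gain must beat all other couplings crossing the cut (lattice sums below $95b^4$).
\item It must also beat the single objective-box load the component can reach (below $37b^4$, which uses the three-leaf-star topology), plus remote and residual terms.
\end{itemize}
This also constrains your terminal. The $45^\circ$ pair must sit at distance $\sqrt2$, giving a coupling of about $28b^4$. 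At distance $1$ it would be about $224b^4$ and would outweigh the $160b^4$ ferromagnetic link.
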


To our knowledge, Theorem~\ref{thm:main} is the first NP-hardness result for exact Riesz-energy subset selection in a Euclidean space in which both the ambient dimension and the exponent are fixed.  The fixed-dimensional aspect is important: on ordered points of the real line, polynomial-time algorithms are available for every fixed $s>0$ through the Monge/submodular structure of the one-dimensional interaction \cite{EmmerichLine2026}.  The theorem therefore identifies a genuine complexity change between the ordered line and the Euclidean plane at the same fixed exponent.

The source problem, i.e., the problem that is reduced in polynomial time to Riesz-energy subset selection, is the planar Ising model with magnetic field appearing in Barahona's classical hardness proof \cite{Barahona1982}.  Barahona starts from the decision version of \emph{Independent Set} restricted to planar cubic graphs, which is NP-complete.  The source Ising model and the target Riesz problem involve different notions of choice.  In the source problem, each vertex $v$ is assigned a spin variable $\sigma_v\in\{-1,+1\}$, and one minimizes an energy function, i.e. the source Hamiltonian $H_B$ defined in Section~\ref{sec:source}, that depends on the fixed interaction graph and on the chosen spin assignment.   Figure~\ref{fig:ising-cardinality-example} illustrates the Ising problem  on the planar cubic graph~$Q_3$. The Ising problem does not involve fixed-cardinality subset selection; it only requires assigning values to the  spin variables.  Fixed-cardinality subset selection enters only in the Riesz-energy target problem, where one minimizes a different energy function over subsets of prescribed size.

\begin{figure}[htbp]
\centering
\begin{tikzpicture}[
    x=1.18cm,y=1.18cm,
    edge/.style={draw=black!55,line width=0.8pt},
    plus/.style={circle,draw=blue!70!black,fill=blue!70!black,
                 text=white,font=\small\bfseries,minimum size=7.2mm,
                 inner sep=0pt},
    minus/.style={circle,draw=orange!90!black,fill=orange!90!black,
                  text=white,font=\small\bfseries,minimum size=7.2mm,
                  inner sep=0pt}
]
  \coordinate (v0) at (-2.0, 1.45);
  \coordinate (v1) at ( 2.0, 1.45);
  \coordinate (v2) at ( 2.0,-1.45);
  \coordinate (v3) at (-2.0,-1.45);
  \coordinate (v4) at (-0.78, 0.58);
  \coordinate (v5) at ( 0.78, 0.58);
  \coordinate (v6) at ( 0.78,-0.58);
  \coordinate (v7) at (-0.78,-0.58);

  \draw[edge] (v0)--(v1)--(v2)--(v3)--cycle;
  \draw[edge] (v4)--(v5)--(v6)--(v7)--cycle;
  \draw[edge] (v0)--(v4);
  \draw[edge] (v1)--(v5);
  \draw[edge] (v2)--(v6);
  \draw[edge] (v3)--(v7);

  \node[plus]  at (v0) {$+1$};
  \node[minus] at (v1) {$-1$};
  \node[plus]  at (v2) {$+1$};
  \node[minus] at (v3) {$-1$};
  \node[minus] at (v4) {$-1$};
  \node[plus]  at (v5) {$+1$};
  \node[minus] at (v6) {$-1$};
  \node[plus]  at (v7) {$+1$};

  \node[align=center,font=\small] at (0,-2.05)
    {$\displaystyle \sum_{uv\in E}\sigma_u\sigma_v=-12,\qquad
       \sum_{v\in V}\sigma_v=0,\qquad H_B(\sigma)=-12.$};
\end{tikzpicture}
\caption{A ground-state assignment of the source Hamiltonian \eqref{eq:HB} on the planar cubic cube graph~$Q_3$.  Blue vertices carry spin $+1$ and orange vertices spin $-1$.  Every edge joins opposite spins, so all twelve antiferromagnetic edge terms contribute $-1$, while the uniform field term sums to zero.  Hence $H_B=-12=|V|/2-4\alpha(Q_3)$.  The four $+1$ spins correspond to a maximum independent set; importantly, their number is an outcome of energy minimization, not a fixed-cardinality constraint on the Ising problem.}
\label{fig:ising-cardinality-example}
\end{figure}

In the geometric reduction, each vertex of the Ising instance is represented by a three-branched selector tree routed along orthogonal directions in the plane.  Ferromagnetic interactions along these tree branches enforce a common spin value, so that the entire tree represents a single Ising variable.  Each edge of the source Ising graph is then realized by an objective interaction between terminal selectors arranged at a $45^\circ$ angle, producing the required antiferromagnetic coupling between the corresponding vertex spins.
Each four-point selector realizes either spin state by selecting one of its two diagonals, and both states contain exactly two Riesz points. Consequently, a spin flip changes the logical state without changing the number of selected points, while the global target cardinality is fixed at twice the number of selectors.

A central idea in the proof is that we use a \emph{diagonal selector} consisting of four points of the Riesz-energy input. Points are forced to be selected on diagonals rather than on horizontals by Riesz-energy minimization, and 'flipping' the diagonal translates to 'flipping' the spin in the Ising problem. The four candidate points form a very small square, and the two logical states are its two diagonals.  The selector has an angular multipole law.  Two axis-aligned neighboring selectors induce a field-free ferromagnetic coupling favoring same spin interaction, whereas a center vector at angle $45^\circ$ induces an antiferromagnetic coupling which favors opposite spin interactions.  Thus the same local object implements ferromagnetic consistency wires to implement source problem vertices (when used at 90$^\circ$ angles) and antiferromagnetic source edge interactions (when used at 45$^\circ$ angles) to implement source problem edges.  At exponent two the spin--spin term decays as $r^{-6}$, which remains summable with ample slack at bends and degree-three junctions.

A useful simplification is specific to $s=2$.  For rational points $p,q\in\Q^2$,
\[
 \norm{p-q}^{-2}=\frac1{(p_1-q_1)^2+(p_2-q_2)^2}\in\Q.
\]
Hence the field compensation and the final threshold can be performed with exact rational arithmetic.  This exactness also yields membership in NP by allowing a proposed subset energy to be evaluated and compared with the threshold using rational arithmetic of polynomial bit length.

Although the present result is complexity-theoretic rather than an application-specific hardness theorem, it may help clarify computational limits in several domains where Riesz-type repulsion is used to promote separation, diversity, or coverage.  In the plane, Riesz-energy minimization has been studied directly for facility location, and closely related repulsive criteria arise in spatial coverage and representative-site placement \cite{Atta2026}.  In multi- and many-objective optimization, Riesz $s$-energy has been used to construct well-spaced weight vectors and reference directions on the unit simplex \cite{BlankDebDhebarBandaruSeada2021}, as well as diverse reference sets for Pareto optimization when a simplex-lattice geometry is not appropriate \cite{FalconCardonaIshibuchiCoello2020}.  In potential theory, minimum Riesz-energy configurations model repelling particle distributions and provide a principled way to discretize rectifiable manifolds by asymptotically well-distributed point sets \cite{HardinSaff2005,HardinSaff2004}.  These formulations are not identical to the fixed-cardinality subset problem studied here, so Theorem~\ref{thm:main} should not be read as a hardness statement for each of them.  Rather, it shows that exact subset selection under the same basic repulsive potential can already be computationally difficult in the natural low-dimensional setting of the Euclidean plane, which supports the practical importance of approximation methods and fast heuristics.


\paragraph{Proof organization.}
Section~\ref{sec:walkthrough} gives a conceptual walkthrough of the reduction before the quantitative estimates are introduced.  Section~\ref{sec:source} fixes the planar cubic Ising source and its constant gap.  Section~\ref{sec:selector} shows how the global cardinality constraint forces exactly one binary diagonal state in every selector.  Sections~\ref{sec:multipole} and \ref{sec:field} derive the angular two-spin interaction and compensate the resulting one-spin fields by exact rational perturbations.  Sections~\ref{sec:routing} and \ref{sec:objective} place the selector trees in an orthogonal planar routing and realize each source edge by a $45^\circ$ objective box.  Section~\ref{sec:consistency} then proves, by energy-decreasing component flips, that every consistency tree can be reduced to one common spin.  Finally, Sections~\ref{sec:normalized} through \ref{sec:encoding} collapse the normalized geometry to the scaled Barahona Hamiltonian, choose the rational threshold, and verify polynomial encoding and NP membership.  The appendices provide a worked finite example, a detailed multipole derivation, and the auxiliary geometric bounds used in the proof.

\section{How the reduction works}\label{sec:walkthrough}

This section explains the construction before the quantitative estimates begin.  Starting from a planar cubic graph, Section~\ref{sec:source} associates one spin with each source vertex and uses Barahona's Hamiltonian to encode maximum independent set.  The geometric reduction replaces each source spin by a tree of four-point selector cells.  Axis-aligned neighboring cells force that tree to carry one common spin, and a $45^\circ$ terminal arrangement between two trees produces the interaction corresponding to a source edge.  Small rational changes in the $+$ diagonals produce the required one-spin fields.  The later estimates prove that selector defects can always be repaired, that consistency defects can always be repaired, and that all remaining long-range interactions are too small to cross the fixed source-energy gap.

\subsection*{From a cubic graph to an Ising system}

Let $G$ be a planar cubic graph.  We attach one binary spin to every vertex.  We interpret one spin state as ``selected'' and the other as ``not selected.''  The Ising Hamiltonian used by Barahona contains an antiferromagnetic interaction on every graph edge together with the same one-spin field at every vertex.

It is useful to be precise about the role of the field.  With the sign convention used in this paper, the field term by itself does not literally reward the selected state.  Rather, because every source vertex has degree three, the three incident edge terms and the one local field term combine in a particularly simple way.  If $t$ vertices are declared selected and $q$ graph edges have both endpoints selected, then, up to a fixed additive constant, the Ising objective is
\[
 4(q-t).
\]
Thus adding a selected vertex gives a benefit of four units, while every edge whose two endpoints are both selected costs four units.  Any selected set containing an internal edge can therefore be repaired without losing objective value, and the ground state encodes a maximum independent set.  Section~\ref{sec:source} gives the exact identity and the resulting gap.

\subsection*{One selector box is one binary variable}

A logical spin is represented by four candidate points at the corners of a very small square.  The global cardinality is chosen so that, after the forcing argument, exactly two points are selected from every square.  Adjacent corners are closer than opposite corners, so a side pair has noticeably larger Riesz energy than a diagonal pair.  Consequently the only low-energy two-point choices are the two diagonals.

We designate one diagonal as spin $+1$ and the other as spin $-1$.  In this sense a selector box is a geometric binary variable.  The selector-forcing penalty is intentionally enormous compared with every later energy scale.  It is of order $b^{-2}$, while the logical spin interactions are of order $b^4$.  This first scale separation ensures that the optimization problem cannot profit by using the wrong number of points in a box or by choosing a side instead of a diagonal.

\subsection*{How a cubic vertex becomes a selector tree}

A cubic source vertex has exactly three incident edges.  After taking a crossing-free orthogonal drawing of the planar graph, we replace one source vertex by one rooted selector tree $T_v$.  Topologically, this tree is extremely simple.  It is a three-leaf star.  There is one root selector and three internally disjoint rectilinear branches, one for each incident graph edge.  The branches may be long and may contain right-angle bends, but there is no recursive branching and no exponential gadget growth.

Along each branch, selector centers are placed at unit spacing on horizontal or vertical segments.  For such axis-aligned neighboring selectors the effective Riesz coupling is ferromagnetic, so equal selector states have lower energy than unequal states.  The chain therefore acts as a wire carrying one binary value through the planar drawing.  Bends and the degree-three junction at the root are handled by the same consistency analysis.

Only one designated selector, the root of $T_v$, is assigned the nonzero target one-spin field.  The dummy selectors along the wires have target field zero.  Small rational perturbations of the selector diagonals cancel the unwanted geometric one-spin fields and leave precisely this intended pattern.  Thus the field is a local part of the Ising encoding; it is not implemented by a separate long-range gadget.

\subsection*{How a graph edge becomes a $45^\circ$ objective box}

For every source edge $uv$, one branch of $T_u$ and one branch of $T_v$ terminate in a small objective box.  Their terminal prefixes are arranged so that the unique closest cross-tree pair of selector centers is separated by a $45^\circ$ vector.  At this angle the sign of the effective selector coupling is opposite to the axis-aligned wire coupling and is antiferromagnetic.

The distinguished $45^\circ$ pair is accompanied by a halo of more distant cross-prefix interactions.  We do not pretend that those interactions vanish.  Instead we bound their complete absolute contribution and prove that the positive distinguished interaction dominates the halo.  Every source edge therefore realizes the same positive objective coefficient $\lambda$.  After the consistency trees have been normalized, that box contributes exactly the intended Ising interaction between the logical spins of $u$ and $v$, up to the globally controlled perturbation terms.

\newpage
\subsection*{The hierarchy of penalties}

The construction is organized around a strict hierarchy of energy scales.  It is helpful to keep the following four levels in mind.

\begin{enumerate}[label=\textbf{Level \arabic*.},leftmargin=3.2em]
\item \textbf{Selector legality.}  The very large intra-box penalty forces exactly one diagonal to be chosen in each four-point selector.  This scale is of order $b^{-2}$ and dominates everything else.

\item \textbf{Tree consistency.}  Once every box is a valid spin, the strongest logical interaction is the nearest-neighbor axis coupling along the selector trees.  A disagreeing wire edge can be repaired by flipping the entire component on the leaf side of that edge.  The gain from repairing the cut edge is larger than the sum of all adverse same-tree tails, one possible objective-box load, and all residual and remote terms.  Hence every energy minimum has a monochromatic tree $T_v$.

\item \textbf{The intended source Hamiltonian.}  After normalization, each whole tree represents a single spin.  The objective box on every source edge and the root field at every source vertex are deliberately matched to the same coefficient $\lambda$.  These are not separate asymptotic scales.  Together they reproduce Barahona's Ising Hamiltonian.

\item \textbf{Remainder terms.}  All unintended long-range interactions, residual fields, and coupling changes caused by the small rational perturbations are forced below a small fraction of $\lambda$.  They are therefore too weak to change the yes/no decision inherited from the fixed source gap.
\end{enumerate}

This hierarchy separates two logically different tasks.  Levels 1 and 2 prove \emph{normalization}, which turns an arbitrary feasible Riesz subset into a configuration with one valid spin per selector and one common spin per source tree.  Levels 3 and 4 prove \emph{simulation}.  For a normalized configuration, the exact energy equals a constant baseline plus $\lambda H_B(\sigma)$ and a remainder whose absolute value is eventually bounded by $\lambda/20$.

\subsection*{Why long wires do not destroy the reduction}

Riesz energy is long range because every selected point interacts with every other selected point.  Hence a long selector tree cannot simply be treated as a nearest-neighbor spin chain.  The proof controls this in two different ways.

First consider interactions inside one tree.  If a defective consistency edge is cut and the component away from the root is flipped, point pairs across that cut occur at many separations.  At spin level the unwanted coupling decays like the sixth power of the distance.  Along a one-dimensional wire, the number of cross-cut pairs at separation $k$ grows only linearly in $k$, so the complete tail is bounded by a convergent sum of the form $\sum k^{-5}$.  Right-angle bends and the degree-three junction lead to similarly convergent lattice sums.  Crucially, these bounds are uniform and do not grow with the total length of the branch.

Second, geometrically unrelated features are placed far apart.  The orthogonal drawing is blown up by a polynomial separation scale $D$.  Even if we pessimistically count all $M^2$ possible remote selector pairs, each remote spin-spin coupling is at most a constant times $b^4D^{-6}$.  The chosen polynomial scale makes the complete remote load tiny compared with one intended objective coefficient.  Thus increasing the wire length only increases $M$ polynomially; the separation is chosen large enough that the total far field still decreases to a negligible level.

The three-leaf topology gives one further simplification in the component-flip proof.  Deleting a consistency edge and taking the component away from the root can expose at most one source-edge objective box.  Therefore a repair never has to pay for several objective interactions at once.

\subsection*{Collapsing a tree to one bit}

The word ``collapse'' refers to a proof step, not to a geometric operation.  Before normalization, a tree containing many selectors has many possible spin assignments.  The component-flip lemma proves that every assignment containing a consistency defect can be strictly improved.  Hence only two relevant states remain.  All selectors in $T_v$ are $+1$, or all are $-1$.

At that point we replace the entire tree in the analysis by one logical variable $\sigma_v$.  Every same-tree spin product is then constant and can be absorbed into a baseline energy.  What remains is one root field for each source vertex, one objective interaction for each source edge, and a small remainder.  In other words, the large geometric instance reduces to the original Ising system at the level of its normalized states.

The final correctness argument can therefore be read as one continuous implication.  A planar cubic independent-set instance determines a planar Ising ground-state problem.  The geometric construction realizes each Ising spin by a normalized selector tree and each Ising edge by one objective box.  Once the normalization lemmas have been applied, comparing Riesz $2$-energies is equivalent to comparing the corresponding scaled Ising energies up to the controlled remainder.  That remainder is made much smaller than the fixed gap between consecutive independent-set values, so it cannot change which side of the decision threshold the instance lies on.

\section{The planar cubic source and its gap}\label{sec:source}

Let $G=(V,E)$ be planar and cubic, with $n=|V|$ and $m=|E|=3n/2$.  The threshold decision problem
\[
  \text{``does $G$ contain an independent set of size at least $K$?''}
\]
is NP-complete on this graph class.  Garey, Johnson, and Stockmeyer proved NP-completeness of the equivalent vertex-cover problem for planar graphs of maximum degree three \cite{GareyJohnsonStockmeyer1976}; minimum vertex cover remains NP-hard even on 2-connected cubic planar graphs \cite{Mohar2001}, using $\alpha(G)+\operatorname{vc}(G)=|V|$; and Uehara showed that the independent-set problem stays NP-complete under the additional assumptions of 3-connectivity and girth greater than three \cite{Uehara1996}.

Barahona formulates the source optimization problem as P4, maximum stable set in a planar cubic graph, and then defines P5 on a planar graph with all interactions antiferromagnetic and uniform magnetic field $1$ \cite{Barahona1982}.  In Barahona's reduction from P4 to P5 the Ising instance uses the same graph $G$.  Consequently P5 remains NP-hard even when its interaction graph is restricted to be planar and cubic.  Rather than appeal to that statement as a black box, we now write the many-one reduction to the threshold form explicitly.  We use the equivalent Hamiltonian
\begin{equation}\label{eq:HB}
 H_B(\sigma)=\sum_{uv\in E}\sigma_u\sigma_v+\sum_{v\in V}\sigma_v,
 \qquad \sigma_v\in\sgnspin.
\end{equation}

\begin{proposition}[Explicit source reduction]\label{prop:source-reduction}
Given a planar cubic Independent Set instance $(G,K)$, let
\[
 H_0=\frac{|V|}{2}-4K.
\]
Then $G$ has an independent set of size at least $K$ if and only if there exists a spin assignment $\sigma\in\{-1,+1\}^{V}$ with $H_B(\sigma)\le H_0$.  Membership in NP is immediate because $H_B(\sigma)$ is an integer sum that can be evaluated in polynomial time.  Hence this restricted Ising threshold problem is NP-complete, and its optimization version is NP-hard.
\end{proposition}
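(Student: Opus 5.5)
We first rewrite $H_B$ combinatorially. Fix $\sigma$ and let $T=\{v:\sigma_v=+1\}$ with $t=|T|$. Let $q$ be the number of edges with both endpoints in $T$, $r$ the number of edges with exactly one endpoint in $T$, and $p$ the number with none. Since $G$ is cubic, double counting incidences gives
\[
 3t = 2q + r,\qquad 3(n-t)=2p+r .
\]
An edge term $\sigma_u\sigma_v$ equals $+1$ on the $q+p$ monochromatic edges and $-1$ on the $r$ cut edges. Hence
\[
 H_B(\sigma)=q+p-r+t-(n-t).
\]
Substituting $r=3t-2q$ and $p=(3n-3t-r)/2=\tfrac32 n-3t+q$ yields
\[
 H_B(\sigma)=\tfrac{n}{2}+4q-4t .
\]
This is the identity $4(q-t)+\text{const}$ announced in Section~\ref{sec:walkthrough}, and it agrees with the $Q_3$ example: $n=8$, $t=4$, $q=0$ give $H_B=-12$.

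\textbf{Forward direction.} Let $I$ be an independent set with $|I|\ge K$. Put $\sigma_v=+1$ exactly on $I$. Then $q=0$ and $t\ge K$, so $H_B(\sigma)\le n/2-4K=H_0$.

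\textbf{Converse.} Suppose $H_B(\sigma)\le H_0$, which means $t-q\ge K$. Starting from $T$, delete one endpoint of each edge lying inside $T$. This removes at most $q$ vertices and leaves an independent set of size at least $t-q\ge K$. Equivalently, we have $\alpha(G)\ge \max_T(|T|-q(T))$. Deleting a vertex never increases $4(q-t)$, since each deletion lowers $t$ by one and $q$ by at least one. Because every step of this argument is an exact integer computation, no gap analysis is needed at this stage.

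\textbf{Complexity consequences.} The map $(G,K)\mapsto(G,H_0)$ is computable in polynomial time and keeps the interaction graph planar and cubic. Planar cubic Independent Set is NP-complete by the results cited above, so the Ising threshold problem is NP-hard. For membership in NP, a spin vector serves as the certificate: $H_B$ is an integer with $|H_B|\le m+n$ and can be evaluated in linear time. NP-hardness of the optimization version follows because its optimum decides the threshold question.

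The only step needing real care is the double-counting algebra, specifically the sign conventions in expressing $p$ and $r$. I will verify it on the $Q_3$ example as a check. No other obstacle is expected.
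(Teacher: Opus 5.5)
Your proof is correct and follows essentially the paper's route: your edge-class double counting $3t=2q+r$ is the same incidence count as the paper's $\sum_{uv\in E}(x_u+x_v)=3t$ under $x_v=(1+\sigma_v)/2$, giving the same identity $H_B=\tfrac n2+4(q-t)$, and your endpoint-deletion argument is exactly the one the paper uses to obtain $\min H_B=\tfrac n2-4\alpha(G)$. The only cosmetic difference is that you read the threshold equivalence directly off $t-q\ge K$ instead of passing through that closed-form minimum.
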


For completeness, set
\[
 x_v=\frac{1+\sigma_v}{2}\in\{0,1\},\qquad
 t=\sum_v x_v,
\]
and let $q$ be the number of edges with both endpoints in $\{v:x_v=1\}$.  Since $G$ is cubic,
\[
 \sum_{uv\in E}(x_u+x_v)=3t.
\]
Substitution into \eqref{eq:HB} gives
\begin{equation}\label{eq:barahona-id}
 H_B(\sigma)=\frac n2+4(q-t).
\end{equation}
Every $t$-vertex set inducing $q$ edges contains an independent set of size at least $t-q$.  Choose one endpoint from every induced edge and delete the union of those at most $q$ chosen vertices.  Conversely an independent set of size $\alpha(G)$ has $q=0$ and $t=\alpha(G)$.  Hence
\begin{equation}\label{eq:barahona-min}
 \min_\sigma H_B(\sigma)=\frac n2-4\alpha(G).
\end{equation}
In particular, for an independent-set target $K$,
\begin{align}
 \alpha(G)\ge K&\Longrightarrow \min H_B\le H_0,\label{eq:source-yes}\\
 \alpha(G)\le K-1&\Longrightarrow \min H_B\ge H_0+4,\label{eq:source-no}
\end{align}
where
\begin{equation}\label{eq:H0}
 H_0=\frac n2-4K.
\end{equation}
The fixed gap $4$ will absorb all geometric approximation errors.

\section{Diagonal selector cells}\label{sec:selector}

\subsection{The unperturbed selector}

For a center $c\in\R^2$ and scale $b>0$, define the four candidate points
\[
 c+(b,b),\quad c+(b,-b),\quad c+(-b,b),\quad c+(-b,-b).
\]
The two intended states are the diagonals
\begin{align*}
 S_+(c)&=\{c+(b,b),c-(b,b)\},\\
 S_-(c)&=\{c+(b,-b),c-(b,-b)\}.
\end{align*}
We identify them with spins $\sigma=+1$ and $\sigma=-1$.

\begin{figure}[ht]
\centering
\begin{tikzpicture}[scale=1.15]
  \draw[gray!50] (-1.25,-1.25) rectangle (1.25,1.25);
  \fill (-1,-1) circle (2.2pt); \fill (1,1) circle (2.2pt);
  \fill (-1,1) circle (2.2pt); \fill (1,-1) circle (2.2pt);
  \draw[thick] (-1,-1)--(1,1);
  \draw[thick,dashed] (-1,1)--(1,-1);
  \node at (0,-1.6) {$S_+$ solid, $S_-$ dashed};
\end{tikzpicture}
\caption{A planar diagonal selector. Exactly two of the four candidate points will be chosen after the forcing argument below.}
\label{fig:selector}
\end{figure}

For $j\in\{0,1,2,3,4\}$, let $I_j$ denote the minimum internal Riesz $2$-energy among choices of $j$ corners of one unperturbed selector cell.  Only pairs of selected points contribute to this internal energy.  The quantity $I_j$ therefore depends only on the pair distances inside one square and does not include interactions with points in other selector cells.

The cases $j=0$ and $j=1$ are important even though their value is zero.  If no corner is selected, there is no pair.  If exactly one corner is selected, there is still no pair.  In both cases the defining pair sum is empty, and therefore
\[
 I_0=I_1=0.
\]
For $j=2$ there are six possible two-corner choices.  Four are sides of length $2b$, and two are diagonals of length $2\sqrt2\,b$.  Since the inverse-square energy decreases when the distance increases, a diagonal has the smaller energy and hence realizes the minimum $I_2$.  Thus
\begin{equation}\label{eq:Idiag}
 I_2=(2\sqrt2 b)^{-2}=\frac1{8}b^{-2}.
\end{equation}
A side pair has distance $2b$ and energy $\frac14b^{-2}$.  The energy loss incurred by choosing a side instead of a diagonal is therefore
\begin{equation}\label{eq:csel}
 \Delta_{\rm sel}=c_{\rm sel}b^{-2},\qquad
 c_{\rm sel}=\frac18.
\end{equation}

For $j=3$, any three corners of a square form the same distance pattern.  Among the three selected pairs, two are sides and one is a diagonal.  Hence every three-corner choice has the same energy, which is $I_3$.  For $j=4$ all four corners are selected, so all six internal pairs contribute.  Four of those pairs are sides and two are diagonals.  Consequently
\begin{align}
 I_3&=2\left(\frac14b^{-2}\right)+\frac18b^{-2}
     =\frac58b^{-2},\label{eq:I3}\\
 I_4&=4\left(\frac14b^{-2}\right)+2\left(\frac18b^{-2}\right)
     =\frac54b^{-2}.
\end{align}
The five values $I_0,\ldots,I_4$ are not separate gadget states.  They are bookkeeping quantities used to prove that a globally feasible subset cannot lower its energy by concentrating too many selected points in some cells and too few in others.  If one selected point is transferred from an overfull cell to an underfull cell, the possible occupancy patterns are $(3,1)\to(2,2)$, $(3,0)\to(2,1)$, $(4,1)\to(3,2)$, and $(4,0)\to(3,1)$.  Their unperturbed internal improvements are respectively $3b^{-2}/8$, $b^{-2}/2$, $b^{-2}/2$, and $5b^{-2}/8$.  The smallest improvement is therefore attained by
\begin{equation}\label{eq:balance-gap}
 (3,1)\longrightarrow(2,2),\qquad
 I_3-2I_2=\frac38b^{-2}.
\end{equation}
Here the entries $3$ and $1$ are the numbers of selected points in the two cells before the transfer.  This is the worst balancing move only in the sense that it yields the smallest guaranteed local energy decrease.

\subsection{Global two-of-four forcing}

Suppose there are $M$ selector cells whose centers are mutually at distance at least $1$, and set the global cardinality to
\begin{equation}\label{eq:k2M}
 k=2M.
\end{equation}
The quantity $M$ is the total number of selector cells in the constructed Riesz instance.  The condition $k=2M$ fixes only the average occupancy at two points per cell.  It does not by itself rule out patterns such as $(3,1)$ or $(4,0)$ across two cells.

The purpose of the next lemma is to turn this global cardinality condition into the intended local selector structure.  It shows that every size-$2M$ subset with an invalid cell can be changed, without altering its cardinality, so that the total Riesz energy strictly decreases.  Repeating the improvement first forces occupancy two in every cell and then replaces every remaining side pair by a diagonal.  The conclusion is local.  Different selector cells may still choose different diagonals.  What has been established is that every cell represents one well-defined spin in $\{-1,+1\}$.  The later consistency-tree argument is responsible for forcing selected groups of cells to agree.

The proof compares a local gain with a global disturbance.  A repair inside one or two cells gains an amount proportional to $b^{-2}$ from the short intra-cell distances.  The same repair also changes interactions with selected points in all other cells.  Because different cell centers are at least unit distance apart, that adverse cross-cell change is only of order $M$.  Condition~\eqref{eq:force-cond} deliberately asks the local $b^{-2}$ gain to dominate this much coarser $M$ bound.  This separation is the reason the lemma remains valid even after the small field-compensation perturbations are introduced.

Section~\ref{sec:field} later perturbs the $+$ diagonal by a very small amount in order to compensate one-spin fields.  The perturbation is not needed for selector forcing itself.  It is included in the statement below so that the forcing property is already valid for the final perturbed instance.

\begin{lemma}[Two-of-four forcing]\label{lem:selector-force}
Let $M$ selector cells have centers that are pairwise at distance at least $1$, let $k=2M$, and assume $b\le10^{-3}$ and that every cell $i$ has the perturbed form of Section~\ref{sec:field} with $\abs{\eta_i}\le10^{-6}b$.  If
\begin{equation}\label{eq:force-cond}
 \frac1{16}b^{-2}>4M,
\end{equation}
then every size-$2M$ subset that is not a valid diagonal in every cell admits a cardinality-preserving move that strictly decreases Riesz $2$-energy.
\end{lemma}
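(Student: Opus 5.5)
The plan is to split the invalid configurations into two cases and treat each with the same local-versus-global comparison. In Case~1 some cell has occupancy different from two. Since the average occupancy is two, there is then a cell $i$ with occupancy at least three and a cell $j$ with occupancy at most one. In Case~2 every cell has occupancy exactly two, but some cell holds a side pair instead of a diagonal.

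\textbf{Case~1.} The move transfers one selected point from cell $i$ to an unselected corner of cell $j$. The occupancy patterns $(3,1)$, $(3,0)$, $(4,1)$, $(4,0)$ are exactly the transfers listed before \eqref{eq:balance-gap}. The point to remove from $i$ and the corner to add in $j$ will be chosen so that the internal energies of the two cells drop by at least the unperturbed improvement of that pattern. The smallest such improvement is $\tfrac38 b^{-2}$, attained by $(3,1)\to(2,2)$.

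\textbf{Case~2.} The move replaces the side pair in the offending cell by one of its diagonals. By \eqref{eq:csel} this gains $\Delta_{\rm sel}=\tfrac18 b^{-2}$. This is the smallest local gain in either case, and it is why \eqref{eq:force-cond} is stated with $\tfrac1{16}$: half of the gain will be spent absorbing perturbation effects and cross-cell changes.

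\textbf{Perturbation.} The $+$ diagonal corners are shifted by $\eta_i$ with $\abs{\eta_i}\le 10^{-6}b$, so every intra-cell distance changes by a relative amount of order $10^{-6}$. Each pair energy therefore changes by a factor $1+O(10^{-6})$. Each internal energy $I_j$ is at most $\tfrac54 b^{-2}$, so the perturbed local gains stay above $\tfrac18 b^{-2}-O(10^{-5})b^{-2}$, which exceeds $\tfrac1{16}b^{-2}$ with a large margin.

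\textbf{Cross-cell term.} This is the step needing care. A moved or removed point lies within $\sqrt2 b(1+10^{-6})$ of its cell center. Any point of another cell lies within the same radius of that cell's center, and the centers are at least $1$ apart. Hence every cross-cell distance is at least $1-3b\ge 0.99$, and every cross-cell pair energy is at most about $1.03$.

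At most $2M$ selected points lie outside the one or two cells involved in the move. In Case~1 the change in cross-cell energy is then at most the sum of the removed point's interactions and the added point's interactions. Each of these is bounded by $2M\cdot 1.03$. The interaction between the cells $i$ and $j$ themselves is also only $O(1)$ per pair with at most $4\cdot4$ pairs; to stay within $4M$ I would simply count all selected points other than the moved one, which gives fewer than $2M+2$ partners for each of the two points. In Case~2 two points leave and two enter the same cell. Pairing each removed point with an added point, the change is again at most $2\cdot 2M\cdot 1.03$.

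One still has to check that $\approx 4.12M$ does not exceed $4M$. It can be avoided by a sharper bound: the removed point's old interactions are subtracted, so only the added interactions count adversely. That gives at most $(2M)\cdot 1.03\cdot 2$ in Case~2 and $2M\cdot1.03$ in Case~1, or, more cleanly, any small constant slack absorbed from the margin between $\tfrac18$ and $\tfrac1{16}$. Since \eqref{eq:force-cond} gives $b^{-2}>64M$, the surplus $\tfrac1{16}b^{-2}-O(10^{-5})b^{-2}$ exceeds $4M$ plus an $O(M)$ correction. So I would fold the constant $1.03$ into that slack rather than insist on exactly $4M$.

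\textbf{Conclusion.} In either case the net energy change is at most $-\tfrac1{16}b^{-2}+4M<0$ by \eqref{eq:force-cond}, so the move strictly decreases the Riesz $2$-energy.

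The main obstacle is the bookkeeping in Case~1 for the three- and four-occupancy patterns. There one must ensure that a suitable point to remove and a corner to add both exist and realise the tabulated gain after perturbation. For a three-point cell, removing a corner leaves a diagonal, and the added corner in a one-point cell should be the opposite corner, so the gains can be verified pattern by pattern. For each pattern I would explicitly choose to remove the corner lying on both side pairs, leaving a diagonal, and to add the corner opposite the existing point.
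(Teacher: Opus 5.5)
Your proposal is correct and follows essentially the paper's argument. It uses the same two cases: first a one-point transfer whose worst pattern is $(3,1)\to(2,2)$ with gain $\tfrac38b^{-2}$, then a side-to-diagonal swap with gain $\tfrac18b^{-2}$. It also uses the same perturbation estimate on the local gain and the same crude $O(M)$ cross-cell bound, in which only the added point's interactions count adversely.

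One small simplification: every diagonal shares a corner with every side, so the Case~2 swap also moves just one point. That gives the paper's clean bound $2M(1-4b)^{-2}<3M\le4M$ directly and makes your slack-folding step unnecessary.
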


\begin{proof}
Every perturbed candidate remains within $2b$ of its cell center.  Distinct cells therefore have point-to-point distance at least $1-4b>0.996$.  Adding one point creates less than $2M(1-4b)^{-2}<3M$ adverse cross-cell energy.  We use the round upper bound $4M$ throughout the proof.

Suppose first that some cell has occupancy below two.  Since the total cardinality is $2M$, another cell has occupancy above two.  Transfer one selected point from an overfull cell to an underfull cell, choosing the removal and insertion so that the resulting two internal energies are minimal.  Before perturbation the smallest possible internal improvement is $3b^{-2}/8$ by \eqref{eq:balance-gap}.  If $|\eta|\le10^{-6}b$, every affected internal distance changes by $O(10^{-6}b)$.  Since $|(r^{-2})'|=2r^{-3}$, the change in the internal transfer gain is less than $10^{-3}b^{-2}$ after summing the at most six affected internal pairs.  More than $b^{-2}/16$ of the local gain therefore remains.  By \eqref{eq:force-cond}, this gain is larger than the possible adverse cross-cell contribution, so the total Riesz energy strictly decreases.

Repeat such transfers until every cell has occupancy two.  A two-point cell is still invalid if its selected points form a side.  Replacing a side pair by a diagonal has unperturbed gain $b^{-2}/8$.  The same derivative estimate leaves at least $b^{-2}/16$ after perturbation.  Equation~\eqref{eq:force-cond} again makes the total move strictly energy decreasing.  Repetition terminates because only finitely many size-$2M$ subsets exist.
\end{proof}

\begin{remark}
With the final choice $b=(10^8M^3)^{-1}$, the left-hand side of \eqref{eq:force-cond} is exactly
\[
 \frac{10^{16}}{16}M^6,
\]
while the right-hand side is only $4M$.  Their ratio is
\[
 \frac{10^{16}}{64}M^5,
\]
which is already larger than $10^{14}$ for $M\ge1$ and grows as $M^5$.  The forcing condition therefore has far more slack than the proof needs.  This is intentional.  Lemma~\ref{lem:selector-force} is a normalization statement.  Its task is to prove the direction of every repair move, not to optimize the selector size or the numerical constants.  The conservative bounds $4M$ and $b^{-2}/16$ may both be weakened substantially without affecting the argument.  The perturbations introduced in Section~\ref{sec:field} are smaller still, so they cannot come close to reversing a selector-forcing move.
\end{remark}

\section{Exact two-spin decomposition and the angular multipole law}\label{sec:multipole}

Let
\[
 u_+=(1,1),\qquad u_-=(1,-1).
\]
For two unperturbed selectors with center difference $R\ne0$, define
\begin{equation}\label{eq:Fst}
 F_R(\sigma,\tau)
 =\sum_{\epsilon,\eta\in\sgnspin}
 \norm{R+b(\eta u_\tau-\epsilon u_\sigma)}^{-2}.
\end{equation}
Every function on two binary spins has the exact Fourier representation
\begin{equation}\label{eq:fourier}
 F_R(\sigma,\tau)=A_R+h_R^{(1)}\sigma+h_R^{(2)}\tau+J_R\sigma\tau,
\end{equation}
where
\begin{align}
 A_R&=\tfrac14(F_{++}+F_{+-}+F_{-+}+F_{--}),\nonumber\\
 h_R^{(1)}&=\tfrac14(F_{++}+F_{+-}-F_{-+}-F_{--}),\nonumber\\
 h_R^{(2)}&=\tfrac14(F_{++}-F_{+-}+F_{-+}-F_{--}),\label{eq:fourier-coeff}\\
 J_R&=\tfrac14(F_{++}-F_{+-}-F_{-+}+F_{--}).\nonumber
\end{align}
Write $r=\norm R$, $R/r=(\cos\theta,\sin\theta)$, and $t=b/r$.

\begin{figure}[htbp]
\centering
\begin{tikzpicture}[x=3.5cm,y=3.5cm,font=\small]
\def\b{0.12}

\newcommand{\selectorpanel}[3]{%
\begin{scope}[shift={(#1,#2)}]
  \pgfmathsetmacro{\aa}{#3}

  \ifnum#3=0
    \def\Jtext{-0.035776}
    \ifnum\taustate=1
      \def\FRtext{4.201188}
      \def\JJtext{-0.035776}
    \else
      \def\FRtext{4.272740}
      \def\JJtext{0.035776}
    \fi
  \else
    \def\Jtext{0.006145}
    \ifnum\taustate=1
      \def\FRtext{2.190833}
      \def\JJtext{0.006145}
    \else
      \def\FRtext{2.055895}
      \def\JJtext{-0.006145}
    \fi
  \fi

  \clip (-0.42,-0.42) rectangle (1.42,1.42);
  \draw[step=0.25,gray!18,very thin]
    (-0.5,-0.5) grid (1.5,1.5);
  \draw[gray!38,thin] (-0.42,0) -- (1.42,0);
  \draw[gray!38,thin] (0,-0.42) -- (0,1.42);

  \ifnum#3=0
    \draw[gray!72,thin,->] (0,0) -- (\aa,1)
      node[pos=0.56,right=2pt,text=gray!75] {$R$};
    \draw[gray!60,thin]
      (0.18,0) arc[start angle=0,end angle=90,radius=0.18];
    \node[anchor=south west,text=gray!72]
      at (0.10,0.11) {$\theta$};
  \else
    \draw[gray!72,thin,->] (0,0) -- (\aa,1)
      node[pos=0.58,above left=1pt,text=gray!75] {$R$};
    \draw[gray!60,thin]
      (0.20,0) arc[start angle=0,end angle=45,radius=0.20];
    \node[anchor=south west,text=gray!72]
      at (0.18,0.055) {$\theta$};
  \fi

  \fill[blue!70!black] ( \b, \b) circle (0.018);
  \fill[blue!70!black] ( \b,-\b) circle (0.018);
  \fill[blue!70!black] (-\b, \b) circle (0.018);
  \fill[blue!70!black] (-\b,-\b) circle (0.018);

  \draw[blue!35,dashed,thin]
    (-\b,\b) -- (\b,-\b);
  \draw[blue!70!black,very thick]
    (-\b,-\b) -- (\b,\b);

  \draw[blue!70!black,fill=white,line width=0.7pt]
    (0,0) circle (0.023);

  \fill[orange!90!black] (\aa+\b,1+\b) circle (0.018);
  \fill[orange!90!black] (\aa+\b,1-\b) circle (0.018);
  \fill[orange!90!black] (\aa-\b,1+\b) circle (0.018);
  \fill[orange!90!black] (\aa-\b,1-\b) circle (0.018);

  \ifnum\taustate=1
    \draw[orange!40,dashed,thin]
      (\aa-\b,1+\b) -- (\aa+\b,1-\b);
    \draw[orange!90!black,very thick]
      (\aa-\b,1-\b) -- (\aa+\b,1+\b);
  \else
    \draw[orange!40,dashed,thin]
      (\aa-\b,1-\b) -- (\aa+\b,1+\b);
    \draw[orange!90!black,very thick]
      (\aa-\b,1+\b) -- (\aa+\b,1-\b);
  \fi

  \draw[orange!90!black,fill=white,line width=0.7pt]
    (\aa,1) circle (0.023);

  \draw[gray!58,thin]
    (-0.42,-0.42) rectangle (1.42,1.42);

  \node[
    anchor=south east,
    align=right,
    font=\scriptsize,
    text=black!82
  ] at (1.36,-0.36) {%
    $J_R\sigma\tau=\JJtext$\\
    $J_R=\Jtext$\\
    $F_R(\sigma,\tau)=\FRtext$};

\end{scope}%
}


\def\taustate{1}
\selectorpanel{0}{2.05}{0}
\selectorpanel{2.05}{2.05}{1}

\def\taustate{-1}
\selectorpanel{0}{0}{0}
\selectorpanel{2.05}{0}{1}

\node[font=\small\bfseries,text=gray!72]
  at (0.50,3.61) {$\theta=90^\circ$};

\node[font=\small\bfseries,text=gray!72]
  at (2.55,3.61) {$\theta=45^\circ$};

\node[
  font=\small\bfseries,
  text=orange!90!black,
  anchor=east
] at (-0.50,2.55) {$\tau=+1$};

\node[
  font=\small\bfseries,
  text=orange!90!black,
  anchor=east
] at (-0.50,0.50) {$\tau=-1$};

\node[
  font=\small\bfseries,
  text=blue!70!black,
  anchor=west
] at (-0.42,4.00) {$\sigma=+1$ fixed};

\end{tikzpicture}

\caption{Four exact two-selector configurations for the interaction
$F_R(\sigma,\tau)$ in \eqref{eq:Fst}. The first selector (blue) is
fixed in state $\sigma=+1$, while the second selector (orange) has
state $\tau=+1$ in the top row and $\tau=-1$ in the bottom row.
The left column has an axis-aligned center-difference vector $R$ with
$\theta=90^\circ$, whereas the right column has $R=(1,1)$ and
$\theta=45^\circ$. Thick colored segments denote the selected
diagonals and dashed segments the alternative selector states.
Each panel reports the exact cross-selector Riesz $2$-energy
$F_R(\sigma,\tau)$, the Fourier coupling $J_R$ from
\eqref{eq:fourier-coeff}, and its spin-spin contribution
$J_R\sigma\tau$. The display uses $b=0.12$ only to make the selector
geometry visible; the reported values are evaluated from the exact
interaction \eqref{eq:Fst}, not from the multipole approximation.}
\label{fig:two-selector-gallery}
\end{figure}

Figure~\ref{fig:two-selector-gallery} gives a concrete illustration of
the angular sign change underlying Lemma~\ref{lem:multipole}.  With the
first selector fixed at $\sigma=+1$, an axis-aligned center difference
has $J_R<0$, so the spin-spin contribution $J_R\sigma\tau$ is smaller
for $\tau=+1$ and the interaction favors equal selector states.  When
the second selector is displaced to the $45^\circ$ direction, the sign
reverses.  Then $J_R>0$, and the lower-energy state has $\tau=-1$.  Thus the
same four-point selector geometry yields the ferromagnetic interaction
used along consistency trees and, after changing only the direction of
$R$, the antiferromagnetic interaction used in the objective boxes.
The displayed values of $F_R(\sigma,\tau)$ are computed directly from
\eqref{eq:Fst}; the comparatively large illustrative value $b=0.12$ is
used only to make the geometry visible and is not part of the
asymptotic parameter regime of the reduction.

\begin{lemma}[Uniform angular multipole law]\label{lem:multipole}
If $t\le10^{-8}$, then
\begin{align}
 J_R&=\bigl(384\sin^2(2\theta)-160\bigr)b^4r^{-6}+\mathcal R_J,\label{eq:Jangle}\\
 h_R^{(1)}=h_R^{(2)}
 &=16\sin(2\theta)b^2r^{-4}
   +128\sin(2\theta)b^4r^{-6}+\mathcal R_h,\label{eq:hangle}
\end{align}
with
\begin{equation}\label{eq:rembound}
 \abs{\mathcal R_J}+\abs{\mathcal R_h}
 \le10^6b^5r^{-7}.
\end{equation}
Consequently
\begin{equation}\label{eq:uniformJ}
 \abs{J_R}\le225b^4r^{-6},\qquad
 \abs{h_R^{(i)}}\le17b^2r^{-4}.
\end{equation}
For an axis-aligned center vector,
\begin{equation}\label{eq:axis}
 h_R^{(1)}=h_R^{(2)}=0\quad\text{exactly},\qquad
 J_R<-159b^4r^{-6}.
\end{equation}
For a $45^\circ$ center vector,
\begin{equation}\label{eq:diagJ}
 J_R>223b^4r^{-6}.
\end{equation}
\end{lemma}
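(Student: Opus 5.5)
The plan is a direct Taylor expansion in complex notation. Identify $\R^2$ with $\mathbb C$, write $R=re^{i\theta}$, and for fixed $(\sigma,\tau,\epsilon,\eta)$ set $w=b(\eta u_\tau-\epsilon u_\sigma)$, with $u_\sigma=1+i\sigma$. Then
\[
\norm{R+w}^{-2}=\frac{1}{(R+w)(\bar R+\bar w)}=r^{-2}\sum_{j,l\ge0}(-1)^{j+l}\Bigl(\frac wR\Bigr)^{j}\Bigl(\frac{\bar w}{\bar R}\Bigr)^{l}.
\]
This double geometric series converges absolutely because $\abs{w/R}\le 2\sqrt2\,t\le 3\cdot10^{-8}$. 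I would sum over $\epsilon,\eta\in\sgnspin$ term by term and then extract $A_R,h_R^{(i)},J_R$ through \eqref{eq:fourier-coeff}.

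The symmetry $(\epsilon,\eta)\mapsto(-\epsilon,-\eta)$ sends $w\mapsto -w$. Hence every term of odd total degree $j+l$ cancels, and only degrees $0,2,4$ survive below degree $5$.
\begin{itemize}
\item Degree $0$ contributes only to $A_R$.
\item In degree $2$, expand $w^2$, $w\bar w$, $\bar w^2$ in terms of $u_\sigma^2=2i\sigma$, $\abs{u_\sigma}^2=2$ and the cross products $u_\sigma u_\tau$, $u_\sigma\bar u_\tau$. After averaging over $\epsilon\eta$, the cross products drop out, since they carry a factor $\epsilon\eta$. What remains is linear in $\sigma$ and in $\tau$ separately, via $\mathrm{Re}(\pm 2i\sigma e^{-2i\theta})$. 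This gives the $16\sin(2\theta)b^2r^{-4}$ field term and no $J$ contribution at this order.
\item In degree $4$, the $\sigma\tau$ part comes from monomials $w^a\bar w^{4-a}$ containing $u_\sigma^2u_\tau^2$-type products. These produce $\cos(4\theta)$ and constant pieces. I would rewrite them as $384\sin^2(2\theta)-160=32-192\cos4\theta$ times $b^4r^{-6}$, together with the $128\sin(2\theta)$ correction to $h$.
\end{itemize}
This bookkeeping is the main obstacle, and I would organize it by the binomial identity for $\eta u_\tau-\epsilon u_\sigma$ raised to each power, keeping only monomials of even degree in $\epsilon$ and in $\eta$. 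As a cross-check on the constants I would evaluate $\theta=0$ and $\theta=\pi/4$ numerically against Figure~\ref{fig:two-selector-gallery}.

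For the remainder \eqref{eq:rembound}, I would bound all terms with $j+l\ge5$. Each Fourier coefficient is an average of four values with weights $\pm\frac14$, so the remainder is at most
\[
r^{-2}\sum_{n\ge5}(n+1)(2\sqrt2\,t)^n\le 6\,(2\sqrt2)^5t^5r^{-2}\bigl(1+O(t)\bigr),
\]
which is far below $10^6b^5r^{-7}$. The uniform bounds \eqref{eq:uniformJ} follow from $\abs{384\sin^2(2\theta)-160}\le224$ and $16+128t^2\le16.01$, because the remainder is at most $10^6t\,b^4r^{-6}\le0.01\,b^4r^{-6}$, and relative to the $h$ scale it is at most $10^6t^3\,b^2r^{-4}$. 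The axis and $45^\circ$ estimates \eqref{eq:axis} and \eqref{eq:diagJ} follow in the same way, from $-160+0.01$ and $224-0.01$.

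For the exact statements I would use symmetry rather than the series.
\begin{itemize}
\item Point reflection through the midpoint of the two centers maps each diagonal to itself and swaps the two cells. Hence $F_R(\sigma,\tau)=F_R(\tau,\sigma)$, which gives $h^{(1)}=h^{(2)}$.
\item For axis-aligned $R$, the reflection in the line through both centers fixes $R$ and exchanges $u_+\leftrightarrow u_-$ in both cells. Hence $F(\sigma,\tau)=F(-\sigma,-\tau)$, and substituting into \eqref{eq:fourier-coeff} gives $h^{(1)}=h^{(2)}=0$ exactly.
\end{itemize}
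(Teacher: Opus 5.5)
Your proof is correct. Its skeleton is the paper's: Taylor expansion in $t$, Fourier projection, a tail bound, and reflection symmetries for the exact statements. The expansion device, however, is genuinely different.

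\textbf{The two expansion devices.} The paper stays in real coordinates. It writes each point-pair term as $r^{-2}(1+z)^{-1}$ with $z=2ta+t^2c$. Because $z$ mixes orders $t$ and $t^2$, its remainder has to be assembled from the geometric tail $\abs{z}^5/(1-\abs{z})$ together with the explicitly listed degree-five-to-eight pieces of $z^3$ and $z^4$. That is the source of the crude constant $1.2\cdot10^5$. Your factorization $\norm{R+w}^2=(R+w)(\bar R+\bar w)$ makes the series homogeneous in $t$, which gives you two things:
\begin{itemize}
\item The degree-$n$ part consists of $n+1$ monomials of modulus at most $(2\sqrt2\,t)^n$, so the tail estimate takes one line.
\item The angular dependence appears directly as the harmonics $e^{\pm2i\theta}$ and $e^{\pm4i\theta}$. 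The shape $\sin2\theta$ for the fields and $32-192\cos4\theta$ for the coupling is visible before any arithmetic.
\end{itemize}
In exchange, the paper's real route yields polynomials in $\cos\theta,\sin\theta$ that are easy to audit symbolically term by term, as in Appendix~\ref{app:multipole}.

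\textbf{The deferred degree-four step.} The bookkeeping you postpone does close. Use $s=\epsilon\eta$, $u_\sigma^2=2i\sigma$ and $\operatorname{Re}(u_\tau\bar u_\sigma)=1+\sigma\tau$. The sums over $\epsilon,\eta$ are then
\begin{itemize}
\item $4b^4(-8-24\sigma\tau)$ for $w^4$,
\item $64ib^4(\sigma+\tau)$ for $w^3\bar w$,
\item $b^4(96+32\sigma\tau)$ for $\abs{w}^4$.
\end{itemize}
These give exactly $(32-192\cos4\theta)b^4r^{-6}$ for the coupling and $128\sin(2\theta)b^4r^{-6}$ for the field.

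\textbf{A small slip in the remainder.} Each value $F_R(\sigma,\tau)$ is itself a sum of four point-pair terms. A Fourier coefficient is one quarter of a signed sum of four such values, so its tail is at most $4r^{-2}\sum_{n\ge5}(n+1)(2\sqrt2\,t)^n$, not $r^{-2}\sum_{n\ge5}(n+1)(2\sqrt2\,t)^n$. The quantity $\abs{\mathcal R_J}+\abs{\mathcal R_h}$ then picks up a further factor $2$. The corrected bound is below $9\cdot10^3\,t^5r^{-2}$, still about two orders of magnitude under $10^6b^5r^{-7}=10^6t^5r^{-2}$, so nothing downstream changes.

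Finally, your explicit point-reflection argument for $F_R(\sigma,\tau)=F_R(\tau,\sigma)$ sharpens the paper's brief appeal to ``the symmetry between the two selectors''. Your axis-case reflection coincides with the paper's.
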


\begin{proof}
For one term in \eqref{eq:Fst}, put $v=\eta u_\tau-\epsilon u_\sigma$ and
\[
 z=2t\widehat R\cdot v+t^2\norm v^2,
 \qquad \widehat R=R/r.
\]
Then
\[
 \norm{R+bv}^{-2}=r^{-2}(1+z)^{-1}.
\]
Since $\norm v\le2\sqrt2$, we have $|z|<6t$.  Expanding $(1+z)^{-1}=1-z+z^2-z^3+z^4+\cdots$ through fourth order and applying the Fourier projection gives
\begin{align*}
 [t^4]J_R/r^{-2}
 &=1536\cos^2\theta\sin^2\theta-160
 =384\sin^2(2\theta)-160,\\
 [t^2]h_R^{(i)}/r^{-2}
 &=32\cos\theta\sin\theta=16\sin(2\theta),\\
 [t^4]h_R^{(i)}/r^{-2}
 &=256\cos\theta\sin\theta=128\sin(2\theta).
\end{align*}
The expansion is carried out in Appendix~\ref{app:multipole}; here we bound the remainder explicitly.  Write $a=\widehat R\cdot v$ and $c=\norm v^2$, so that $z=2ta+t^2c$ with $\abs a\le2\sqrt2$ and $c\le8$, and
\[
 (1+z)^{-1}=\sum_{n=0}^{4}(-z)^n+\rho,
 \qquad
 \abs\rho\le\frac{\abs z^5}{1-\abs z}\le\frac{(6t)^5}{1-6t}<7777\,t^5 .
\]
The polynomial part splits into the terms of degree at most four in $t$, which produce the displayed coefficients, and the terms of degree five to eight, namely $-(6ac^2t^5+c^3t^6)$ from $-z^3$ and $32a^3ct^5+24a^2c^2t^6+8ac^3t^7+c^4t^8$ from $z^4$.  For $t\le10^{-8}$ their absolute value is below
\[
 \bigl(6\cdot2\sqrt2\cdot64+32\cdot16\sqrt2\cdot8\bigr)t^5+(512+12288)t^6+8\cdot2\sqrt2\cdot512\,t^7+4096\,t^8<6900\,t^5 .
\]
Hence each of the sixteen point-pair terms deviates from its degree-four polynomial by less than $(7777+6900)t^5<14700\,t^5$ in units of $r^{-2}$.  A Fourier coefficient in \eqref{eq:fourier-coeff} is one quarter of a signed sum of the four values $F_R(\sigma,\tau)$, each of which is a sum of four point-pair terms, so $\abs{\mathcal R_J}<4\cdot14700\,t^5r^{-2}$ and likewise for $\mathcal R_h$.  Together
\[
 \abs{\mathcal R_J}+\abs{\mathcal R_h}<1.2\cdot10^5\,t^5r^{-2}\le10^6b^5r^{-7},
\]
which proves \eqref{eq:rembound}.  The bound is far from sharp: the substitution $(\epsilon,\eta)\to(-\epsilon,-\eta)$ maps $a\to-a$ and fixes $c$, so $F_R$ is an even function of $t$, every odd-order projection vanishes identically, and the true remainder is $O(t^6r^{-2})$.  The cruder uniform bound suffices for the reduction.

At $t\le10^{-8}$ the remainder is at most $0.01b^4r^{-6}$ at the spin-spin scale.  The angular coefficient ranges from $-160$ to $224$, giving \eqref{eq:uniformJ}.  For an axis direction, reflection interchanges the two diagonal states and cancels both one-spin coefficients exactly, while the spin-spin coefficient is negative.  At $45^\circ$ the leading coefficient is $224$, so the coupling is positive.  The field bound follows from \eqref{eq:hangle}.
\end{proof}

\begin{remark}
The sign separation is the essential point.  An axis pair is a field-free ferromagnetic consistency edge, while a $45^\circ$ pair is antiferromagnetic.  At $s=2$ the complete coefficients are rational functions of the rational center coordinates and $b$.
\end{remark}

\section{Exact rational first-order field compensation}\label{sec:field}

The $45^\circ$ interaction and other non-axis pairs produce one-spin fields.  We cancel their leading total locally by perturbing only the $+$ diagonal of each selector:
\[
 c_i\pm b(1,1)
 \quad\longrightarrow\quad
 c_i\pm(b+\eta_i)(1,1),
\]
while the $-$ diagonal remains $c_i\pm b(1,-1)$.

The internal two-state field coefficient becomes
\begin{equation}\label{eq:B-eta}
 B(\eta)=\frac1{16}\left[(b+\eta)^{-2}-b^{-2}\right],
\end{equation}
so
\begin{equation}\label{eq:beta}
 \beta:=B'(0)=-\frac18b^{-3},
 \qquad |\beta|^{-1}=8b^3.
\end{equation}

Let $G_i$ be the complete unperturbed one-spin field at cell $i$, summed over all other cells.  Since center distances are at least one, Lemma~\ref{lem:multipole} gives
\begin{equation}\label{eq:Gbound}
 |G_i|\le17Mb^2.
\end{equation}
We will later define a common objective coefficient $\lambda$ with
\begin{equation}\label{eq:lambda-range-preview}
 19b^4<\lambda<37b^4.
\end{equation}
Set target fields
\begin{equation}\label{eq:targetfields}
 g_i=\begin{cases}
 \lambda,&i\text{ is the root cell of a source-vertex tree},\\
 0,&\text{otherwise}.
 \end{cases}
\end{equation}
The first-order correction is
\begin{equation}\label{eq:eta-star}
 \eta_i^*=\frac{g_i-G_i}{\beta}.
\end{equation}
Because every unperturbed pair energy is rational, $G_i$, $\lambda$, $\beta$, and hence $\eta_i^*$ are rational exactly.

\begin{lemma}[Rational field compensation]\label{lem:field}
Assume
\begin{equation}\label{eq:b-choice-abstract}
 b\le\frac1{10^8M^3}.
\end{equation}
Taking $\eta_i=\eta_i^*$ yields rational perturbed coordinates such that
\begin{enumerate}[label=(\roman*)]
\item $|\eta_i|<150Mb^5$ for every $i$;
\item $|\eta_i|<10^{-6}b$, so the selector-forcing bounds remain valid;
\item the total absolute residual one-spin field obeys
\begin{equation}\label{eq:field-resid}
 \sum_i|\widetilde G_i-g_i|<\frac{\lambda}{100};
\end{equation}
\item the total absolute change of all spin-spin Fourier coefficients obeys
\begin{equation}\label{eq:J-pert}
 \sum_{i<j}|\widetilde J_{ij}-J_{ij}|<\frac{\lambda}{100}.
\end{equation}
\end{enumerate}
All quantities can be computed exactly by rational arithmetic in polynomial time.
\end{lemma}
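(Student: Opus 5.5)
The plan is to bound $\eta_i^*$ directly from \eqref{eq:eta-star}, and then control every quantity that changes under the perturbation. Each such change is first order in some $\eta_j$, and each bound is compared with $\lambda>19b^4$ using \eqref{eq:b-choice-abstract}.

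\textbf{Steps (i) and (ii).} Using $|g_i|\le\lambda<37b^4$, \eqref{eq:Gbound} and $|\beta|^{-1}=8b^3$ gives
\[
 |\eta_i^*|\le 8b^3\bigl(37b^4+17Mb^2\bigr)=136Mb^5+296b^7<150Mb^5 ,
\]
since $b\le10^{-8}$. This is (i). For (ii), note that $150Mb^5/b=150Mb^4\le150M\cdot10^{-32}M^{-12}<10^{-6}$. So Lemma~\ref{lem:selector-force} applies to the perturbed instance.

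\textbf{Step (iii).} Write the perturbed field at cell $i$ as
\[
 \widetilde G_i=B(\eta_i)+G_i+\sum_{j\ne i}\bigl(\widetilde h_{ij}-h_{ij}\bigr),
\]
where $h_{ij}$ is the one-spin Fourier coefficient at $i$ of the pair $(i,j)$ from \eqref{eq:fourier-coeff}, recomputed with the displaced $+$ diagonals. By the choice of $\eta_i^*$, the first two terms equal $g_i+\bigl(B(\eta_i)-\beta\eta_i\bigr)$.

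For the Taylor error I use $|B''|\le\frac{6}{16}(b-|\eta|)^{-4}<b^{-4}$ on the relevant interval. This gives $|B(\eta_i)-\beta\eta_i|\le\frac12 b^{-4}\eta_i^2<2\cdot10^4M^2b^6$.

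For the cross terms, each of the sixteen point pairs between cells $i$ and $j$ has its points moved by at most $\sqrt2(|\eta_i|+|\eta_j|)$. Distances stay above $0.99$, and $|(r^{-2})'|=2r^{-3}<3$. Hence every $F_{ij}(\sigma,\tau)$ changes by $O(|\eta_i|+|\eta_j|)$, say by at most $50(|\eta_i|+|\eta_j|)$. Every Fourier coefficient is a quarter of a signed sum of four of these values, so it changes by at most $50\cdot 300Mb^5$.

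Summing over $j$ and then over $i$ bounds the total residual by
\[
 M\cdot 2\cdot10^4M^2b^6+M^2\cdot 1.5\cdot10^4Mb^5\le 4\cdot10^4M^3b^5 .
\]
Since $M^3b\le10^{-8}$, this is $\le 4\cdot10^{-4}b^4<\lambda/100$.

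\textbf{Step (iv).} This uses the same per-pair estimate applied to $J_{ij}$, summed over the at most $M^2/2$ pairs. The result is again $O(M^3b^5)\ll\lambda/100$.

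\textbf{Rationality and polynomial time.} All centers and $b$ are rational, so every unperturbed pair energy $\norm{p-q}^{-2}$ is rational. Hence $G_i$, $\beta$, $\lambda$ and $\eta_i^*$ are rational, and so are the perturbed coordinates $c_i\pm(b+\eta_i)(1,1)$. Each $\eta_i^*$ is a sum of $O(M)$ rationals with denominators of polynomial bit length, so the computation is polynomial.

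\textbf{Main obstacle.} The delicate point is bookkeeping rather than size. One has to fix precisely what ``perturbed field'' $\widetilde G_i$ means: it is the linear-in-$\sigma_i$ part of the total energy, including the internal $B(\eta_i)$ term. One must also check that moving the $+$ diagonal of cell $j$ affects the field at $i$ only through the pair Fourier coefficients. Once the two-spin decomposition \eqref{eq:fourier} is applied to the perturbed pair energies, the crude Lipschitz bound suffices, and none of the multipole structure of Lemma~\ref{lem:multipole} is needed for the perturbation terms.
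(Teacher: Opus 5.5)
Your proposal is correct and follows essentially the same route as the paper. You bound $\eta_i^*$ directly from \eqref{eq:eta-star} and \eqref{eq:Gbound}, control the internal bias by a quadratic Taylor estimate for $B$, and control all pair fields and couplings by a Lipschitz bound proportional to $|\eta_i|+|\eta_j|$ with inter-cell distances above $0.99$. You then absorb everything using $M^3b\le10^{-8}$ and $\lambda>19b^4$, exactly as the paper does. Your explicit decomposition $\widetilde G_i=B(\eta_i)+G_i+\sum_{j\ne i}(\widetilde h_{ij}-h_{ij})$ and your derivation of the Lipschitz constant from the sixteen point pairs simply spell out what the paper folds into its round constant $100$.
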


\begin{proof}
From \eqref{eq:Gbound}, \eqref{eq:lambda-range-preview}, and $|\beta|^{-1}=8b^3$,
\[
 |\eta_i^*|
 \le8b^3(17Mb^2+37b^4)
 <150Mb^5.
\]
Under \eqref{eq:b-choice-abstract}, this is far below $10^{-6}b$.

For $|\eta|\le b/100$, Taylor's theorem applied to \eqref{eq:B-eta} gives the conservative quadratic estimate
\begin{equation}\label{eq:Bquad}
 |B(\eta)-\beta\eta|\le b^{-4}\eta^2.
\end{equation}
All inter-cell point distances stay above $0.99$.  Since $|\nabla_x\norm{x}^{-2}|=2\norm{x}^{-3}$, changing two diagonal radii by $\eta_i,\eta_j$ changes any two-spin Fourier coefficient by at most
\begin{equation}\label{eq:Lipschitz}
 100(|\eta_i|+|\eta_j|),
\end{equation}
where the round constant absorbs the point-pair and Fourier sums.

The total nonlinear internal-bias residual is at most
\[
 M b^{-4}(150Mb^5)^2=22500M^3b^6.
\]
The total variation of all pair couplings is at most
\[
 100\sum_{i<j}(|\eta_i|+|\eta_j|)
 <15000M^3b^5,
\]
and, because every pair carries two one-spin coefficients $h^{(1)}_{ij}$ and $h^{(2)}_{ij}$, the total variation of all pair fields is at most twice that, $30000M^3b^5$.  Using $\lambda>19b^4$ and \eqref{eq:b-choice-abstract}, all three quantities are much smaller than $\lambda/100$: under \eqref{eq:b-choice-abstract} one has $M^3b\le10^{-8}$ and $M^3b^2\le10^{-16}$, so $22500M^3b^6+30000M^3b^5<4\cdot10^{-4}b^4<0.19b^4<\lambda/100$.  This proves \eqref{eq:field-resid} and \eqref{eq:J-pert} with large slack.

Finally, when $s=2$ every squared distance and every energy term is rational.  Therefore the finite Fourier sums defining $G_i$ and $\lambda$ are rational, and \eqref{eq:eta-star} uses only rational operations.  The bit lengths remain polynomial because the construction contains only polynomially many terms whose rational coordinates have polynomial bit length.
\end{proof}

\section{Planar orthogonal routing}\label{sec:routing}

We now construct the selector centers.  The source graph is planar and has maximum degree three.  Polynomial-time algorithms produce orthogonal grid drawings for maximum-degree-four planar graphs with polynomial area and a polynomial number of bends; for example Papakostas and Tollis give linear-time area-efficient constructions \cite{PapakostasTollis1998}.  We only need the coarse consequence stated next.

\begin{lemma}[Coarse orthogonal drawing]\label{lem:orthogonal}
From a planar cubic graph $G$ one can compute in polynomial time a crossing-free orthogonal drawing on the integer grid with total Manhattan edge length $L_0=\operatorname{poly}(n)$ and a polynomial number of bends.
\end{lemma}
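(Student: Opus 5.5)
The plan is to derive Lemma~\ref{lem:orthogonal} from a known orthogonal drawing algorithm for planar graphs of maximum degree four, and then check that the two quantitative claims (polynomial total edge length, polynomially many bends) follow from the grid-area bound. We need nothing beyond existence and polynomiality; the specific constants are irrelevant for the reduction.

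\textbf{Step 1 (applicability).} A planar cubic graph $G$ has maximum degree $3\le4$. Hence the hypotheses of the Papakostas--Tollis construction \cite{PapakostasTollis1998}, or of Tamassia's classical network-flow method, are met. In linear (or at worst polynomial) time we first compute a planar embedding, for example by Hopcroft--Tarjan. We then run the algorithm, which places each vertex at a distinct integer grid point and each edge as a rectilinear polyline with integer bend coordinates. Distinct edges meet only at common endpoints, so the drawing is crossing-free. The algorithm outputs a grid of size $O(n)\times O(n)$ with at most $O(n)$ bends in total. For max-degree-three graphs one can even guarantee at most one bend per edge apart from a constant number of exceptions.

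\textbf{Step 2 (length bound).} Each edge is a rectilinear path inside an $N\times N$ grid with $N=O(n)$. Each segment between consecutive bends therefore has length at most $N$. An edge with $\beta_e$ bends has Manhattan length at most $(\beta_e+1)N$. Summing over the $m=3n/2$ edges gives
\[
L_0\le (m+\textstyle\sum_e\beta_e)\,N=O(n)\cdot O(n)=O(n^2),
\]
which is polynomial. The number of bends is $O(n)$ by Step~1.

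\textbf{Step 3 (encoding).} All coordinates are integers of magnitude $O(n)$, so the drawing has polynomial bit size and is computed in polynomial time.

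The main obstacle is only bookkeeping: making sure that the cited algorithm indeed returns integer coordinates with both vertices and bends on grid points, and that it handles vertices of degree exactly three without extra gadgets. If one prefers a fully self-contained argument, I would fall back on Tamassia's min-cost-flow formulation. A feasible orthogonal representation always exists for degree $\le4$, and the compaction step assigns integer coordinates bounded by the total number of edge segments, which again yields polynomial size.
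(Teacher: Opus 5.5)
Your proposal is correct and takes essentially the paper's approach: the paper also invokes a known polynomial-time orthogonal grid-drawing algorithm for planar graphs of maximum degree four (Papakostas--Tollis) and keeps only the coarse consequence, and your bound $L_0\le(m+\sum_e\beta_e)N=O(n^2)$ just spells out the length estimate the paper leaves implicit. The one point to check is that the algorithm you actually cite is stated to return a crossing-free drawing for planar input; Tamassia's flow method, which you already give as a fallback, does.
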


Set
\begin{equation}\label{eq:Lbar}
 \overline L=L_0+n+1,
 \qquad
 S=10^5\overline L^2,
 \qquad
 D=S/100=1000\overline L^2.
\end{equation}
Scale the orthogonal drawing by $S$.  Distinct coarse grid lines are now separated by $S=100D$.

Around every source vertex and bend reserve a feature box of radius $3D$.  On every source edge choose one straight scaled segment and reserve an objective box of longitudinal size $6D$ around an interior point.  Every nonzero grid segment has length at least $S=100D$ after scaling.  If two reserved local features would otherwise use overlapping portions of the same corridor, extend the relevant straight segment by a constant multiple of $D$ before placing the unit-spaced selectors.  The extension remains inside the crossing-free corridor, whose separation from unrelated corridors is $100D$.  It therefore creates no crossing, and because only a polynomial number of segments are extended by polynomial length, all coordinates and total routing length remain polynomial.

After the objective box is inserted, the original source edge is severed.  The half-edge routed from a source vertex $v$ to the corresponding objective box becomes one branch of a tree $T_v$.  Hence each $T_v$ is topologically a three-leaf star with one root at $v$ and three internally disjoint rectilinear branches.  Along every branch place selector centers at unit spacing.  At a bend there is a selector center at the bend point, so both incident consistency edges have length one.

The number $M$ of selector cells obeys the coarse bound
\begin{equation}\label{eq:Mbound}
 M\le10\overline L S.
\end{equation}
Indeed the total scaled edge length is $L_0S$, and all local detours and objective prefixes add only $O(nD)\le O(nS)$ length.

After the center set is complete, fix
\begin{equation}\label{eq:b-final}
 \displaystyle b=\frac1{10^8M^3}.
\end{equation}
This ensures every application of Lemma~\ref{lem:multipole}, Lemma~\ref{lem:selector-force}, and Lemma~\ref{lem:field}.

\subsection{A complete vertex gadget}
Figure~\ref{fig:vertex-gadget} shows how the local pieces fit together around one cubic source vertex.  The root selector represents the source spin and carries its compensated target field.  Three unit-spaced axis-aligned branches propagate that spin through bends to three distinct source-edge objective boxes.  Only one objective box is expanded in the figure; the other two are represented by terminal boxes.  The important topological property is that deleting any consistency edge and taking the component away from the root reaches at most one objective box.

\begin{figure}[htbp]
\centering
\begin{tikzpicture}[x=0.92cm,y=0.92cm]
  \tikzset{
    sel/.style={draw,minimum size=4.5mm,inner sep=0pt,fill=white},
    smallsel/.style={draw,minimum size=3.5mm,inner sep=0pt,fill=white},
    obj/.style={draw,dashed,rounded corners=2pt,inner sep=4pt},
    wire/.style={thick},
    aux/.style={gray!65}
  }

  \node[sel] (r) at (0,0) {};
  \node[below=3pt of r] {$\sigma_v$};

  \draw[wire] (r) -- (-5,0);
  \draw[wire] (r) -- (0,3) -- (4,3);
  \draw[wire] (r) -- (0,-3) -- (-4,-3);

  \foreach \x in {-1,-2,-3,-4,-5} {\node[smallsel] at (\x,0) {};}
  \foreach \y in {1,2,3} {\node[smallsel] at (0,\y) {};}
  \foreach \x in {1,2,3,4} {\node[smallsel] at (\x,3) {};}
  \foreach \y in {-1,-2,-3} {\node[smallsel] at (0,\y) {};}
  \foreach \x in {-1,-2,-3,-4} {\node[smallsel] at (\x,-3) {};}

  \draw[aux] (-0.13,-0.13) -- (0.13,0.13);
  \draw[aux] (-2.13,-0.13) -- (-1.87,0.13);
  \draw[aux] (-0.13,1.87) -- (0.13,2.13);
  \draw[aux] (1.87,2.87) -- (2.13,3.13);
  \draw[aux] (-0.13,-2.13) -- (0.13,-1.87);
  \draw[aux] (-2.13,-3.13) -- (-1.87,-2.87);

  \node[obj,minimum width=1.3cm,minimum height=0.8cm] at (-5.9,0) {$Z_{ve_1}$};
  \node[obj,minimum width=1.3cm,minimum height=0.8cm] at (-4.9,-3) {$Z_{ve_3}$};

  \draw[obj] (4.45,1.95) rectangle (9.55,4.25);
  \node[font=\small] at (7.0,4.55) {objective box $Z_{ve_2}$};
  \draw[wire] (4,3) -- (6.1,3);
  \foreach \x in {5,6} {\node[smallsel] at (\x,3) {};}
  \node[smallsel] (p0) at (6.1,3) {};
  \node[below=2pt of p0] {$p_0$};

  \node[smallsel] (q0) at (7.1,4) {};
  \node[left=3pt of q0] {$q_0$};
  \draw[wire] (q0) -- (9.1,4);
  \foreach \x in {8.1,9.1} {\node[smallsel] at (\x,4) {};}
  \draw[densely dashed,thick] (p0) -- (q0);
  \node[rotate=36] at (6.60,3.52) {$45^\circ$};
  \node[below] at (5.2,2.68) {$T_v$ prefix};
  \node[font=\small] at (8.25,3.58) {foreign tree};

  \draw[very thick] (0.95,2.86) -- (0.95,3.14);
  \node[above] at (1.05,3.25) {cut edge};
  \draw[->,thick] (2.1,2.45) to[bend right=18] (3.35,3.0);
  \node[align=center] at (2.0,2.18) {flipped leaf-side\\component};

  \draw[->] (-1.25,0.68) -- (-0.15,0.15);
  \node[align=center] at (-1.85,0.92) {target field\\at root};
\end{tikzpicture}
\caption{Complete degree-three consistency gadget for one source vertex $v$.  Selector cells lie at unit spacing on the rectilinear tree, and each leaf terminates in one objective box.  The expanded box shows the unique closest $45^\circ$ terminal pair $(p_0,q_0)$.  The marked cut illustrates that a leaf-side component flip meets at most one objective box.  Formal separations are set by $D$ and $S$.}
\label{fig:vertex-gadget}
\end{figure}

\section{The standard objective box}\label{sec:objective}

Inside each objective box use the same template, up to translation, rotation by $90^\circ$, and reflection.  In horizontal orientation the two source trees terminate in selector-center prefixes
\begin{equation}\label{eq:obj-rays}
 p_i=(-i,0),\qquad
 q_j=(1+j,1),
 \qquad 0\le i,j\le D.
\end{equation}
The distinguished pair is
\[
 p_0=(0,0),\qquad q_0=(1,1),
\]
whose center vector has angle $45^\circ$ and length $\sqrt2$.  The prefixes extend away from one another.  Beyond $p_D$ and $q_D$, each branch reconnects to its coarse corridor; all reconnecting features are at distance at least $D$ from the opposite tree.

\paragraph{Remark (geometric fit of the objective box).}
The template above fits inside the reserved part of an orthogonal source-edge corridor.  After scaling, every nonzero straight segment of the grid drawing has length at least $S=100D$, so we may reserve a subsegment of length $6D$ for the objective box.  The $p$ prefix occupies length $D$ on one side of the distinguished pair and the $q$ prefix occupies length $D$ on the other side, with a transverse offset of one unit.  At the far end of the $q$ prefix, a unit vertical step returns to the original corridor.  This reconnecting step is at distance at least $D$ from every center of the opposite prefix and is therefore remote for cross-tree bookkeeping.  Inside its own consistency tree it is \emph{not} an ordinary bend: the $q$ prefix and the continuation along the original corridor are two parallel arms at transverse distance one, and the pairs $(q_{D-1},(1+D,0))$ and $(q_D,(2+D,0))$ have $45^\circ$ center vectors although they lie in the same tree.  Section~\ref{sec:consistency} charges these arm-to-arm pairs by a separate lattice sum \eqref{eq:Usum}.  If the free side of the corridor is opposite, the template is reflected.  Since the original orthogonal drawing is planar and distinct corridors are separated by $100D$, the unit offset remains inside the reserved corridor and creates no crossing.

\begin{figure}[ht]
\centering
\begin{tikzpicture}[scale=0.85]
  \draw[thick] (-5,0)--(0,0);
  \draw[thick] (1,1)--(6,1);
  \fill (0,0) circle (2.4pt) node[below] {$p_0$};
  \fill (1,1) circle (2.4pt) node[above] {$q_0$};
  \foreach \x in {-1,-2,-3,-4,-5} \fill (\x,0) circle (1.6pt);
  \foreach \x in {2,3,4,5,6} \fill (\x,1) circle (1.6pt);
  \draw[dashed] (0,0)--(1,1);
  \node at (0.55,0.28) {$45^\circ$};
  \node[below] at (-3,0) {$T_u$ prefix};
  \node[above] at (4,1) {$T_v$ prefix};
\end{tikzpicture}
\caption{Standard objective box.  The unique closest cross-tree center pair is $(p_0,q_0)$; all other local cross-tree pairs form a summable halo.}
\label{fig:objective}
\end{figure}

For the unperturbed selectors define
\begin{equation}\label{eq:lambda-def}
 \lambda=\sum_{i=0}^{D}\sum_{j=0}^{D}J_{q_j-p_i}.
\end{equation}
Rotations and reflections preserve this spin-spin coefficient, so every source edge uses the same~$\lambda$.

\begin{lemma}[Positive objective coefficient]\label{lem:objective}
The objective coefficient satisfies
\begin{equation}\label{eq:lambda-range}
 19b^4<\lambda<37b^4,
\end{equation}
and the complete absolute spin-spin coupling load between the two objective prefixes is less than $37b^4$.
\end{lemma}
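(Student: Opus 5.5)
The plan is to split $\lambda$ in \eqref{eq:lambda-def} into the distinguished term $J_{q_0-p_0}$ and a halo consisting of all the other $(D+1)^2-1$ terms. The distinguished term is controlled two-sidedly by Lemma~\ref{lem:multipole}. The halo is bounded in absolute value by a convergent lattice sum using the uniform bound \eqref{eq:uniformJ}. This one absolute bound gives both inequalities in \eqref{eq:lambda-range} and the claim about the complete absolute load.

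First I check that Lemma~\ref{lem:multipole} applies. Every center difference $q_j-p_i=(1+i+j,1)$ has length $r\ge\sqrt2$. With $b=(10^8M^3)^{-1}$ from \eqref{eq:b-final}, this gives $t=b/r\le10^{-8}$.

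\textbf{Distinguished pair.} Here $R=(1,1)$ is a $45^\circ$ vector and $r^{-6}=1/8$. By \eqref{eq:diagJ} and \eqref{eq:uniformJ},
\[
 \tfrac{223}{8}b^4<J_{q_0-p_0}\le\tfrac{225}{8}b^4,
\]
so $J_{q_0-p_0}$ lies in the interval $(27.8\,b^4,\,28.2\,b^4)$.

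\textbf{Halo.} For $(i,j)\neq(0,0)$, set $m=i+j\ge1$. Then $r^2=(1+m)^2+1$, and exactly $m+1$ index pairs share each value of $m$. Extending the sum over all $m\ge1$, independently of $D$, \eqref{eq:uniformJ} gives
\[
 \sum_{(i,j)\ne(0,0)}\abs{J_{q_j-p_i}}\le225\,b^4\sum_{m\ge1}\frac{m+1}{\bigl((m+1)^2+1\bigr)^3}.
\]
I evaluate the first few terms explicitly:
\begin{itemize}[leftmargin=2em]
\item $m=1$ contributes $2/125=0.016$;
\item $m=2$ contributes $3/1000$;
\item $m=3$ contributes $4/17^3<0.00082$;
\item $m=4$ contributes $5/26^3<0.00029$.
\end{itemize}
For the tail I use $(m+1)/((m+1)^2+1)^3<(m+1)^{-5}$ and an integral comparison:
\[
 \sum_{m\ge5}(m+1)^{-5}\le\int_{5.5}^{\infty}x^{-5}\,dx<0.0003 .
\]
So the lattice sum is below $0.021$, and the halo is at most $225\cdot0.021\,b^4<4.8\,b^4$.

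\textbf{Combining.} Adding the halo bound to the distinguished term gives
\[
 23\,b^4<\lambda<33\,b^4,
\]
which lies inside $(19b^4,37b^4)$. The complete absolute load satisfies the same upper estimate:
\[
 \sum_{i,j}\abs{J_{q_j-p_i}}\le28.2\,b^4+4.8\,b^4<37\,b^4.
\]

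\textbf{Main obstacle.} The only delicate step is keeping the halo bound uniform in $D$. The near terms $m=1,2$ dominate the sum, and their crude bound $225\,b^4r^{-6}$ already costs about $4\,b^4$. If the numbers had been tighter, I would replace \eqref{eq:uniformJ} at $m=1$ by the signed angular formula \eqref{eq:Jangle}, with $\sin^2(2\theta)=16/25$ for $R=(2,1)$. The generous window in \eqref{eq:lambda-range} makes this unnecessary.
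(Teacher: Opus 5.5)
Your proof is correct and follows the same route as the paper: you isolate the $45^\circ$ term $J_{q_0-p_0}\in(27.8b^4,28.2b^4)$ using \eqref{eq:diagJ} and \eqref{eq:uniformJ}, then bound the halo by $225b^4$ times a lattice sum indexed by $i+j$. The only difference is that you keep the exact squared distance $(m+1)^2+1$ and obtain a halo below $4.8b^4$, whereas the paper drops the $+1$ and gets the cruder bound $225b^4(\zeta(5)-1)<8.4b^4$; either bound fits inside the stated window.
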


\begin{proof}
The distinguished vector $(1,1)$ has $r=\sqrt2$ and angle $45^\circ$.  By \eqref{eq:diagJ},
\[
 J_{q_0-p_0}>223b^4(\sqrt2)^{-6}>27.8b^4.
\]
For every other pair,
\[
 \norm{q_j-p_i}^2=(i+j+1)^2+1.
\]
Putting $k=i+j+1$, there are exactly $k$ nonnegative pairs $(i,j)$ with this value of $k$.  Hence
\begin{align*}
 \sum_{(i,j)\ne(0,0)}|J_{q_j-p_i}|
 &\le225b^4\sum_{k=2}^\infty k\,k^{-6}\\
 &=225b^4(\zeta(5)-1)\\
 &<8.4b^4.
\end{align*}
Thus $\lambda>27.8b^4-8.4b^4>19b^4$.  The distinguished term is at most $225b^4(\sqrt2)^{-6}=28.125b^4$, so the complete absolute load is below $36.6b^4<37b^4$, which also gives the upper bound on~$\lambda$.
\end{proof}

\begin{remark}
The terminal pair is not literally isolated.  It is the unique closest $45^\circ$ pair, and the absolute halo is smaller than its positive coefficient.  This is the invariant needed in the reduction.
\end{remark}

\section{Consistency trees and component-flip dominance}\label{sec:consistency}

Axis-aligned unit neighbors in a consistency tree have, before field perturbation,
\begin{equation}\label{eq:wire-main}
 J_{\rm nn}<-159b^4.
\end{equation}
They therefore prefer equal spins.

For a cut of a straight unit-spaced path, the number of cross-cut pairs at path distance $k$ is $k$.  Excluding the nearest pair,
\begin{equation}\label{eq:straight-tail}
 \Lambda_{\rm straight}
 \le225b^4\sum_{k=2}^\infty k^{-5}
 <8.4b^4.
\end{equation}
At a right-angle sector define
\begin{equation}\label{eq:Qsum}
 Q_6=\sum_{i,j\ge1}(i^2+j^2)^{-3}.
\end{equation}
Since $i^2+j^2\ge2ij$,
\[
 Q_6\le\frac18\zeta(3)^2<0.181.
\]
For opposite collinear rays,
\begin{equation}\label{eq:Psum}
 P_6=\sum_{i,j\ge1}(i+j)^{-6}
 =\zeta(5)-\zeta(6)<0.020.
\end{equation}
\paragraph{Position of the cut on a straight run.}
Let the cut lie on a straight run of unit-spaced cells at path distance $a\ge1$ from one end of the run.  A perpendicular segment attached at that end contributes cross-cut pairs at squared distances $i^2+j^2$ with $i\ge a$ and $j\ge1$, hence at most
\begin{equation}\label{eq:sector-offset}
 \sum_{i\ge a}\sum_{j\ge1}(i^2+j^2)^{-3}
 \le\sum_{i\ge a}\int_0^\infty\frac{dy}{(i^2+y^2)^3}
 =\frac{3\pi}{16}\sum_{i\ge a}i^{-5}
 \le\frac{3\pi}{16}\Bigl(a^{-5}+\frac{a^{-4}}4\Bigr)
 \le a^{-4}\qquad(a\ge2),
\end{equation}
and in every case at most $Q_6$, since the pairs form a subset of the full sector.  A collinear segment attached at that end contributes $\sum_{i\ge a,\,j\ge1}(i+j)^{-6}\le P_6$, which is smaller still.  Because reserved feature boxes of radius $3D$ and objective boxes of length $6D$ do not overlap, and the objective prefixes use only the central $2D+1$ cells of their box, every straight run between two consecutive features of a tree (root, bend, objective-box step, or leaf) has length at least $4D$, with two exceptions treated separately below: the $q$ prefix of length $D$ and the unit step of the objective box.  Hence for a cut on a long run at least one end is at path distance $a\ge2D$ and contributes less than $(2D)^{-4}<10^{-3}$ by \eqref{eq:sector-offset}, while the near end is the root, a bend, or a leaf.

At the root, the run containing the cut meets the two remaining branches, which are either both perpendicular to it or one perpendicular and one opposite-collinear; at a bend it meets one perpendicular run; a leaf exposes nothing inside the tree (its cross-tree load is charged in Section~\ref{sec:objective}).  Charging the straight tail, two perpendicular sectors, one collinear sector, and $10^{-3}$ for the far end therefore covers every cut on a long run with slack:
\begin{equation}\label{eq:localtree}
 \Lambda_{\rm tree}^{\rm local}
 <225b^4(0.037+2\cdot0.181+0.020+0.001)
 <95b^4.
\end{equation}

\paragraph{The objective-box U-turn.}
Inside the objective-box template of Section~\ref{sec:objective} the tree $T_v$ contains the prefix cells $q_j=(1+j,1)$, $0\le j\le D$, the step cell $(1+D,0)$, and the continuation cells $(1+D+m,0)$, $m\ge1$, along the original corridor.  The prefix and the continuation are two parallel arms at transverse distance one joined by the unit step.  This is not a single right-angle bend: the two arms are not adjacent segments, they are not at distance $D$ from each other, and the pairs $(q_{D-1},(1+D,0))$ and $(q_D,(2+D,0))$ have $45^\circ$ center vectors and therefore carry an antiferromagnetic coupling of leading size $28b^4$ each although they lie in the same tree.  We charge all arm-to-arm pairs separately.  A prefix cell $(1+D-i,1)$ and a continuation-side cell $(1+D+j,0)$, $i,j\ge0$, are at squared distance $(i+j)^2+1$.  Summing over all such pairs except the unit step itself, which is the consistency edge under repair whenever it crosses the cut, gives
\begin{equation}\label{eq:Usum}
 U_6=\sum_{\substack{i,j\ge0\\(i,j)\ne(0,0)}}\bigl((i+j)^2+1\bigr)^{-3}
 =\sum_{k\ge1}(k+1)(k^2+1)^{-3}<0.281.
\end{equation}
For a cut on the prefix, on the step, or on the continuation, the cross-cut same-tree pairs consist of pairs on one straight arm, bounded by the straight tail, together with a subset of the arm-to-arm pairs counted in $U_6$; the leaf end $q_0$ exposes nothing further inside the tree, and the far end of the continuation is a bend at path distance at least $4D$, which contributes less than $10^{-3}$ by \eqref{eq:sector-offset}.  Hence a cut in the U-turn region has local same-tree load below $225b^4(0.037+0.281+0.001)<72b^4$, which is inside \eqref{eq:localtree}.  The two $45^\circ$ pairs account for $56b^4$ of this budget.

We use the uniform bound
\begin{equation}\label{eq:localtree-round}
 \Lambda_{\rm tree}^{\rm local}<95b^4
\end{equation}
for every cut of every consistency tree.

\paragraph{Remark (elementary certification of the lattice sums).}
The numerical constants used above require no numerical lattice-sum evaluation.  The straight-tail constant follows from
\[
 \sum_{k=2}^\infty k^{-5}
 \le \sum_{k=2}^{6} k^{-5}+\int_6^\infty x^{-5}\,dx
 <0.03678+0.00020
 <0.037;
\]
note that stopping the finite sum at $k=5$ and integrating from $5$ gives $0.03706$, which would not certify the constant.  For the right-angle and opposite-collinear contributions, the estimates already displayed above give
\[
 Q_6\le \frac18\zeta(3)^2<0.181,
 \qquad
 P_6=\zeta(5)-\zeta(6)<0.020.
\]
For the U-turn sum, $(k+1)(k^2+1)^{-3}\le(k+1)k^{-6}=k^{-5}+k^{-6}$ gives
\[
 U_6\le\sum_{k=1}^{5}(k+1)(k^2+1)^{-3}+\sum_{k\ge6}\bigl(k^{-5}+k^{-6}\bigr)
 <0.2794+0.0004<0.281.
\]
Together these elementary bounds imply \eqref{eq:localtree} and the U-turn bound with slack and hence certify the rounded budget \eqref{eq:localtree-round}.  The actual values are $\zeta(5)-1=0.03693$, $Q_6=0.1474$, $P_6=0.01958$, and $U_6=0.27964$.

\subsection{Remote interactions}\label{sec:remote}

Call a pair of cells \emph{local} if both lie on the same straight run, on two runs sharing a bend or the root, on the two arms of one objective-box U-turn, or in the two prefixes of one objective box; these are exactly the pairs charged in \eqref{eq:straight-tail}--\eqref{eq:Usum} and in Lemma~\ref{lem:objective}.  By the $100D$ coarse-grid spacing and the $3D$ feature boxes, every other pair of cells is at center distance at least $D$.  Thus
\begin{align}
 \Lambda_{\rm rem}
 &\le225M^2b^4D^{-6}\nonumber\\
 &\le225(10\overline LS)^2b^4(S/100)^{-6}\nonumber\\
 &=2.25\cdot10^{16}\overline L^2S^{-4}b^4\nonumber\\
 &<10^{-3}b^4.\label{eq:remote}
\end{align}
The last inequality uses $S=10^5\overline L^2$.

\subsection{Normalization lemma}

Here \emph{normalization} does not mean geometric rescaling.  It is a sequence of cardinality-preserving, strictly energy-decreasing repair moves that places an arbitrary feasible Riesz subset into the intended normal form.  The selector-forcing lemma has already established the first part of that form: every selector contains exactly one of its two valid diagonal pairs and hence represents one spin.  The remaining task in this section is to remove consistency defects inside each tree by component flips.  We call a tree $T_v$ \emph{monochromatic} if all of its selectors realize the same spin, either all $+1$ or all $-1$.  Once $T_v$ is monochromatic, the whole geometric tree can be decoded unambiguously as the single source spin $\sigma_v$.  The lemma below therefore supplies the second normalization step and is the soundness bridge from valid selector states to the discrete Ising state space simulated by the construction.

Apply Lemma~\ref{lem:field} with target fields \eqref{eq:targetfields}.  Thus dummy selectors have target field zero and only the root of each $T_v$ has target field $\lambda$.

\begin{lemma}[Planar consistency-tree normalization]\label{lem:normalize}
After selector forcing, every spin configuration can be transformed by strictly energy-decreasing, cardinality-preserving component flips into one in which every tree $T_v$ is monochromatic.  Each repair has coupling-load margin greater than $25b^4$.
\end{lemma}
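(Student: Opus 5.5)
We argue by a potential-decreasing repair procedure. By Lemma~\ref{lem:selector-force} every selector already holds a valid diagonal, so the subset is described by a spin vector $(s_i)_{i\le M}$. Up to an additive constant, its energy is
\[
E=\sum_i \widetilde G_i s_i+\sum_{i<j}\widetilde J_{ij}s_is_j ,
\]
where the field $\widetilde G_i$ includes the internal bias $B(\eta_i)$. Flipping one diagonal per selector keeps the cardinality $2M$. Suppose some tree $T_v$ is not monochromatic. Then some consistency edge $e=\{x,y\}$ of $T_v$ (an axis-aligned unit pair) joins unequal spins. Deleting $e$ splits the star $T_v$, and we let $C$ be the component not containing the root. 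We flip every selector in $C$. The energy change is
\[
\Delta E=-2\sum_{i\in C}\widetilde G_i s_i-2\sum_{i\in C,\,j\notin C}\widetilde J_{ij}s_is_j .
\]
The cut edge itself contributes $-2|\widetilde J_e|$, since $\widetilde J_e<0$ and $s_xs_y=-1$. By \eqref{eq:wire-main} and Lemma~\ref{lem:field}(iv), its magnitude is at least $2(159-0.2)b^4$.

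Next we bound every other term of $\Delta E$ in absolute value. \emph{Fields:} $C$ avoids the root, so every $i\in C$ has target $g_i=0$. Hence $\sum_{i\in C}|\widetilde G_i|\le\sum_i|\widetilde G_i-g_i|<\lambda/100<0.37b^4$ by \eqref{eq:field-resid} and \eqref{eq:lambda-range}. \emph{Same-tree pairs across the cut:} these are the local pairs other than $e$, and are bounded by the uniform budget \eqref{eq:localtree-round}, $<95b^4$. This covers straight tails, root and bend sectors, and the U-turn sum \eqref{eq:Usum}. \emph{Objective-box load:} since $T_v$ is a three-leaf star and $C$ lies on the leaf side of one branch, $C$ meets at most one objective box. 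The whole cross-prefix coupling of that box is $<37b^4$ by Lemma~\ref{lem:objective}. \emph{Remote pairs:} all other pairs, including pairs to other trees outside objective prefixes, are at distance $\ge D$ and total $<10^{-3}b^4$ by \eqref{eq:remote}. \emph{Perturbation of couplings:} the change from $J$ to $\widetilde J$ over all pairs is $<\lambda/100<0.37b^4$ by \eqref{eq:J-pert}.

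Combining, with everything divided by $2$,
\[
\tfrac12\Delta E<-158.8b^4+95b^4+37b^4+0.37b^4+0.37b^4+0.001b^4<-25b^4 .
\]
This is the claimed margin. To guarantee termination, we do not rely on a monotone defect count. Each flip strictly decreases the energy, and there are finitely many spin states, so the process halts. At a halting state no tree has a disagreeing consistency edge, and every tree is connected by consistency edges, so every tree is monochromatic.

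The main obstacle is checking that the local budget \eqref{eq:localtree-round} really covers every cut position. This includes cuts on the short $q$ prefix and on the unit step of the U-turn, where the same-tree $45^\circ$ pairs are antiferromagnetic. It also includes cuts near the root, where two other branches are exposed simultaneously. We also need that remote pairs between $C$ and the other branches of $T_v$ are never double-counted as local pairs. I would verify case by case: cut on a long run, cut adjacent to the root or a bend, and cut inside the U-turn. In each case I would match the exposed pairs to the sums \eqref{eq:straight-tail}, \eqref{eq:Qsum}, \eqref{eq:Psum}, \eqref{eq:sector-offset} and \eqref{eq:Usum}, using the feature-box separation to send everything else into $\Lambda_{\rm rem}$.
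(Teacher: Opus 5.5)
Your proposal is correct and follows the paper's proof essentially step for step. You flip the root-free component at a disagreeing consistency edge, weigh the favorable edge against the same-tree budget $95b^4$, one objective-box load $37b^4$, the remote load $10^{-3}b^4$ and the two $\lambda/100<0.37b^4$ perturbation charges, and get termination from strict energy decrease over finitely many states. The only slip is numerical: Lemma~\ref{lem:field}(iv) as stated gives only $|\widetilde J_e|>(159-0.37)b^4\approx158.6b^4$, not $158.8b^4$ (your $0.2$ is justified only by that lemma's proof), but the margin $158.6-132.75$ still exceeds $25$.
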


\begin{proof}
Root $T_v$ at its source-vertex selector.  If a tree edge $e$ has opposite endpoint spins, delete $e$ and flip all selector spins in the component not containing the root.  This makes the nearest-neighbor product on $e$ equal to $+1$.  The favorable unperturbed coupling magnitude exceeds $159b^4$ by \eqref{eq:wire-main}.

The same-tree local adverse load is below $95b^4$ by \eqref{eq:localtree-round}, which covers cuts on straight runs, at bends, at the root, and in the objective-box U-turn.  Since $T_v$ is topologically a three-leaf star, the leaf-side component reaches at most one objective box; its complete absolute objective load is below $37b^4$.  All remaining unperturbed spin-spin interactions contribute less than $10^{-3}b^4$ by \eqref{eq:remote}.

The intended root field is not flipped.  Lemma~\ref{lem:field} bounds the total residual field by $\lambda/100<0.37b^4$ and the total coupling perturbation by the same amount.  Charging the latter once against the favorable edge and once against all adverse couplings is conservative.  Thus the actual favorable edge has magnitude greater than $158.6b^4$, while the adverse load is below
\[
 95+37+0.001+0.37+0.37<132.75
\]
in units of $b^4$.  The margin exceeds $25b^4$.  An Ising term changes by twice its coupling magnitude under a one-sided spin flip, so the Riesz energy strictly decreases.  Repetition terminates with every $T_v$ monochromatic.
\end{proof}

\section{Normalized Hamiltonian}\label{sec:normalized}

After Lemma~\ref{lem:normalize}, write $\sigma_v$ for the common spin of all selectors in $T_v$.  Classify the exact two-state terms as follows:
\begin{enumerate}[label=(C\arabic*)]
\item state-independent selector and pair constants;
\item same-tree spin-spin terms, for which every product equals $1$;
\item root fields, whose targets contribute $\lambda\sum_v\sigma_v$;
\item objective-box spin-spin terms, each contributing $\lambda\sigma_u\sigma_v$ before perturbation;
\item residual fields, coupling changes caused by field perturbation, and remote inter-tree spin-spin terms.
\end{enumerate}
Absorb classes (C1) and (C2) into a baseline $E_0$ and denote class (C5) by $R(\sigma)$.  Lemma~\ref{lem:field} contributes less than $2\lambda/100$.  By \eqref{eq:remote} and $\lambda>19b^4$, the remote term is less than $10^{-4}\lambda$.  Hence
\begin{equation}\label{eq:Rbound}
 |R(\sigma)|<\frac{\lambda}{20}.
\end{equation}

\begin{proposition}[Barahona normal form]\label{prop:normalform}
Every normalized selector configuration has energy
\begin{equation}\label{eq:normalform}
 E_2(S_\sigma)
 =E_0+\lambda H_B(\sigma)+R(\sigma)
 =E_0+\lambda\left(\sum_{uv\in E}\sigma_u\sigma_v+\sum_v\sigma_v\right)+R(\sigma),
\end{equation}
where $19b^4<\lambda<37b^4$ and $|R(\sigma)|<\lambda/20$.
\end{proposition}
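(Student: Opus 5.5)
The plan is to expand the Riesz $2$-energy of a normalized configuration $S_\sigma$ into its exact two-state Fourier form and then sort every term into the classes (C1)--(C5).

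\textbf{Step 1: exact decomposition.} Since every selector contains exactly one diagonal, $E_2(S_\sigma)$ splits into internal selector energies and pair energies. For each cell $i$, the internal energy of the perturbed selector is an affine function of its spin,
\[
\tfrac12\bigl(I_+ + I_-\bigr) + \tfrac12\bigl(I_+ - I_-\bigr)\sigma_i ,
\]
and its linear part equals $B(\eta_i)\sigma_i$ in the normalization of \eqref{eq:B-eta}. For each pair of cells $i<j$, the perturbed cross energy has the exact form \eqref{eq:fourier},
\[
\widetilde A_{ij}+\widetilde h^{(1)}_{ij}\sigma_i+\widetilde h^{(2)}_{ij}\sigma_j+\widetilde J_{ij}\sigma_i\sigma_j ,
\]
with no truncation. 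Summing over cells and pairs gives a constant, plus $\sum_i \widetilde G_i\sigma_i$ where $\widetilde G_i$ is the total perturbed field including $B(\eta_i)$, plus $\sum_{i<j}\widetilde J_{ij}\sigma_i\sigma_j$.

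\textbf{Step 2: collapse to source spins.} Put $\sigma_i=\sigma_v$ for every $i\in T_v$.

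\emph{Fields.} Write $\widetilde G_i=g_i+(\widetilde G_i-g_i)$. By \eqref{eq:targetfields}, the target part contributes exactly $\lambda\sum_v\sigma_v$, since only roots carry the field $\lambda$. The residual part is bounded in absolute value by $\sum_i|\widetilde G_i-g_i|<\lambda/100$, using \eqref{eq:field-resid}.

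\emph{Couplings.} Write $\widetilde J_{ij}=J_{ij}+(\widetilde J_{ij}-J_{ij})$. The differences contribute less than $\lambda/100$ by \eqref{eq:J-pert}. For the unperturbed $J_{ij}$, sort the pairs as follows.
\begin{itemize}
\item[(a)] Same-tree pairs have $\sigma_i\sigma_j=\sigma_v^2=1$. Their contribution is the constant $\sum J_{ij}$, which goes into $E_0$. This covers the U-turn pairs and every remote same-tree pair.
\item[(b)] Pairs between the two prefixes of the objective box of edge $uv$ sum to $\lambda\sigma_u\sigma_v$, exactly by the definition \eqref{eq:lambda-def} together with rotation and reflection invariance.
\item[(c)] All other inter-tree pairs are remote in the sense of Section~\ref{sec:remote}. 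Their absolute sum is below $10^{-3}b^4<10^{-4}\lambda$ by \eqref{eq:remote}, using $\lambda>19b^4$.
\end{itemize}

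\textbf{Step 3: assemble.} Define $E_0$ as all constants plus the same-tree sum from (a). Define $R(\sigma)$ as the residual fields, the coupling changes, and the remote inter-tree terms from (c). This gives \eqref{eq:normalform}, and \eqref{eq:Rbound} follows from
\[
\tfrac{2}{100}+10^{-4}<\tfrac1{20}.
\]
The range $19b^4<\lambda<37b^4$ is Lemma~\ref{lem:objective}.

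\textbf{Main obstacle: the bookkeeping.} The delicate point is making sure every inter-tree pair is counted exactly once, either in an objective box or as remote, and never charged twice. I would first verify from the routing of Section~\ref{sec:routing} that any two cells of distinct trees lying outside one common objective box are at center distance at least $D$. Then the remote estimate \eqref{eq:remote} applies verbatim.

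A second subtlety concerns the prefix-to-prefix pairs. Only their unperturbed couplings enter $\lambda$; their perturbation differences are already included in \eqref{eq:J-pert}. One must also check that $E_0$ is independent of $\sigma$, which holds because same-tree products are identically $1$ on normalized configurations.
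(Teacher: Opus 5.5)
Your proposal is correct and follows the paper's own argument. You take the exact Fourier decomposition of every perturbed cell and cell pair and sort the terms into the classes (C1)--(C5): same-tree products are absorbed into $E_0$, and the objective-box prefix sums give $\lambda\sigma_u\sigma_v$ by \eqref{eq:lambda-def}. You then bound $R(\sigma)$ by the two $\lambda/100$ estimates of Lemma~\ref{lem:field} plus the remote bound \eqref{eq:remote}, giving $2/100+10^{-4}<1/20$, exactly as the paper does. Your version is more explicit than the paper's in identifying the internal linear term with $B(\eta_i)$ and in sending the perturbation of the prefix couplings through \eqref{eq:J-pert}.
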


\begin{proof}
The exact Fourier decomposition \eqref{eq:fourier} applies to every selector pair, including the rationally perturbed cells.  Classes (C1) through (C4) give the baseline and source Hamiltonian; all deviations are charged to class (C5).  The displayed remainder bound follows from Lemma~\ref{lem:field} and \eqref{eq:remote}.
\end{proof}

\section{Exact rational threshold and correctness}\label{sec:threshold}

Let $(G,K)$ be the planar cubic independent-set instance and let $H_0$ be given by \eqref{eq:H0}.  Define
\begin{equation}\label{eq:Tstar}
 \tau=E_0+\lambda(H_0+2).
\end{equation}
At exponent two this threshold is rational exactly; Section~\ref{sec:encoding} gives the encoding argument.

\begin{lemma}[Completeness]\label{lem:complete}
If $G$ has an independent set of size at least $K$, then the constructed Riesz instance has a size-$2M$ subset of energy below $\tau$.
\end{lemma}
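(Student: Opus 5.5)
The plan is to exhibit an explicit normalized configuration and evaluate it with Proposition~\ref{prop:normalform}.

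First, take an independent set of size at least $K$. By \eqref{eq:source-yes} in Proposition~\ref{prop:source-reduction} (equivalently \eqref{eq:barahona-min}), there is a spin assignment $\sigma\in\{-1,+1\}^V$ with $H_B(\sigma)\le H_0$. Concretely, one can set $\sigma_v=+1$ exactly on the independent set. Then $q=0$ and $t\ge K$ in \eqref{eq:barahona-id}, so $H_B(\sigma)=n/2-4t\le H_0$.

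Second, build the subset $S_\sigma$ directly. For every source vertex $v$ and every selector cell $i$ of the tree $T_v$, select the (perturbed) diagonal $S_{\sigma_v}$ of that cell. Each cell contributes exactly two points, so $|S_\sigma|=2M=k$ by \eqref{eq:k2M}. Every cell holds a valid diagonal and every tree is monochromatic, so $S_\sigma$ is already a normalized selector configuration. No repair moves from Lemma~\ref{lem:selector-force} or Lemma~\ref{lem:normalize} are needed; those lemmas matter only for soundness. One small bookkeeping point is that every selector cell belongs to some tree $T_v$. This holds because after the objective boxes are inserted, the routing of Section~\ref{sec:routing} partitions all selector centers into the trees.

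Third, apply Proposition~\ref{prop:normalform} to get
\[
E_2(S_\sigma)=E_0+\lambda H_B(\sigma)+R(\sigma)\le E_0+\lambda H_0+\frac{\lambda}{20}<E_0+\lambda(H_0+2)=\tau,
\]
using $\lambda>0$ from Lemma~\ref{lem:objective} and $|R(\sigma)|<\lambda/20$ from \eqref{eq:Rbound}. This gives strict inequality, so the energy is below $\tau$.

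The only step I expect to need care is confirming that the baseline $E_0$ in \eqref{eq:Tstar} is the same quantity as in the normal form for this particular $\sigma$. Classes (C1) and (C2) must be state-independent on all normalized configurations. For (C2), all same-tree products equal $1$ regardless of the common spin. For (C1), the constants come from the Fourier decomposition \eqref{eq:fourier}, including the perturbed internal energies, whose spin-dependent part was moved into the field term. So $E_0$ is a single configuration-independent number, and the inequality above closes the argument.
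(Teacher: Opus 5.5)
Your proposal is correct and is essentially the paper's proof: choose $\sigma$ with $H_B(\sigma)\le H_0$ via \eqref{eq:source-yes}, select the matching diagonal in every cell of each tree $T_v$, and apply Proposition~\ref{prop:normalform} with $|R(\sigma)|<\lambda/20<2\lambda$. Your extra bookkeeping fills in details the paper leaves implicit without changing the argument: the explicit $\sigma$ taken from the independent set, the observation that every cell lies in some $T_v$ so that $|S_\sigma|=2M$, and the configuration-independence of $E_0$.
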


\begin{proof}
By \eqref{eq:source-yes}, choose $\sigma$ with $H_B(\sigma)\le H_0$ and select the corresponding diagonal throughout each tree $T_v$.  Proposition~\ref{prop:normalform} gives
\[
 E_2(S_\sigma)\le E_0+\lambda H_0+\frac{\lambda}{20}<E_0+\lambda(H_0+2)=\tau.
\]
\end{proof}

\begin{lemma}[Soundness]\label{lem:sound}
If the constructed Riesz instance has a size-$2M$ subset of energy at most $\tau$, then $G$ has an independent set of size at least $K$.
\end{lemma}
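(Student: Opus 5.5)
We argue by contraposition through normalization.  Suppose $S\subseteq P$ has $|S|=2M$ and $E_2(S)\le\tau$.  We first replace $S$ by a normalized configuration whose energy is no larger.  By Lemma~\ref{lem:selector-force}, which applies because $b=(10^8M^3)^{-1}$ satisfies \eqref{eq:force-cond} and Lemma~\ref{lem:field}(ii) gives $|\eta_i|<10^{-6}b$, we repeatedly apply strictly energy-decreasing cardinality-preserving moves until every cell holds a valid diagonal.  This terminates because the subset space is finite.  Then Lemma~\ref{lem:normalize} applies strictly energy-decreasing component flips until every tree $T_v$ is monochromatic.  The result is a normalized configuration $S_\sigma$ for some $\sigma\in\{-1,+1\}^V$ with $E_2(S_\sigma)\le E_2(S)\le\tau$.

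Next we invoke Proposition~\ref{prop:normalform}:
\[
 E_0+\lambda H_B(\sigma)-\frac{\lambda}{20}<E_2(S_\sigma)\le\tau=E_0+\lambda(H_0+2).
\]
Dividing by $\lambda>0$ (Lemma~\ref{lem:objective}) gives $H_B(\sigma)<H_0+2+\tfrac1{20}<H_0+4$.

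Hence $\min_\sigma H_B<H_0+4$.  By the contrapositive of \eqref{eq:source-no}, we cannot have $\alpha(G)\le K-1$, so $\alpha(G)\ge K$, which is the claim.  Equivalently, by \eqref{eq:barahona-min}, $\tfrac n2-4\alpha(G)<\tfrac n2-4K+4$, which gives $\alpha(G)>K-1$.  Since $\alpha(G)$ is an integer, $\alpha(G)\ge K$.

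The main point to check is that the normalization moves are valid for the actual perturbed instance, with the same constant $\lambda$ and the same $E_0$ used in $\tau$.  Concretely, both the forcing lemma and the consistency lemma must be stated for the rationally perturbed cells.  Lemma~\ref{lem:selector-force} explicitly allows $|\eta_i|\le10^{-6}b$, and the proof of Lemma~\ref{lem:normalize} already charges the perturbation residuals from Lemma~\ref{lem:field}.  So the chain of inequalities only moves energy downward.

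A second subtlety is that the remainder bound $|R(\sigma)|<\lambda/20$ must hold uniformly over all $\sigma$ and not only for ground states.  It does, because \eqref{eq:Rbound} comes from absolute sums of residual fields, coupling changes and remote couplings, and none of these depends on $\sigma$.  Beyond these checks the argument is a direct combination of results already established.
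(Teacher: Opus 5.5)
Your proposal is correct and follows the paper's proof: you normalize with Lemma~\ref{lem:selector-force} and then Lemma~\ref{lem:normalize}, so energy only decreases, and you compare the normal form of Proposition~\ref{prop:normalform}, with $|R(\sigma)|<\lambda/20$, against $\tau=E_0+\lambda(H_0+2)$ and the source gap \eqref{eq:source-no}. The only difference is cosmetic: you derive $H_B(\sigma)<H_0+4$ directly (despite announcing a contraposition), whereas the paper assumes $\alpha(G)\le K-1$ and reaches the contradiction $E_2(S_\sigma)>\tau$.
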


\begin{proof}
Apply Lemma~\ref{lem:selector-force} until every cell selects one diagonal, then Lemma~\ref{lem:normalize} until every $T_v$ is monochromatic.  Energy only decreases.  Let $\sigma$ be the resulting source assignment.

If $\alpha(G)\le K-1$, then $H_B(\sigma)\ge H_0+4$ by \eqref{eq:source-no}.  Therefore
\[
 E_2(S_\sigma)
 \ge E_0+\lambda(H_0+4)-\frac{\lambda}{20}
 >E_0+\lambda(H_0+2)=\tau,
\]
a contradiction.  Hence $\alpha(G)\ge K$.
\end{proof}

\section{Polynomial encoding and NP membership}\label{sec:encoding}

The unperturbed selector centers have integer coordinates.  The selector radius
\[
 b=(10^8M^3)^{-1}
\]
is rational with $O(\log M)$ bits.  Lemma~\ref{lem:field} uses the exact rational corrections $\eta_i=(g_i-G_i)/\beta$, so every candidate point belongs to $\Q^2$.

The number of selector cells satisfies
\[
 M\le10\overline L S=10^6\overline L^3,
\]
which is polynomial in the source size.  The constructed point set has $4M$ points and the requested subset size is $k=2M$.

For $s=2$, every pair energy is rational exactly.  If $p,q\in\Q^2$, then
\[
 \norm{p-q}^{-2}
 =\frac1{(p_1-q_1)^2+(p_2-q_2)^2}\in\Q.
\]
Hence all unperturbed Fourier coefficients, $G_i$, $\lambda$, the perturbed coordinates, the baseline $E_0$, and the threshold $\tau$ in \eqref{eq:Tstar} are rational numbers computable by polynomially many exact rational operations.  Their numerator and denominator bit lengths remain polynomial because only polynomially many rational terms of polynomial bit length are combined.

This also proves membership in NP.  A certificate is a subset $S$ of cardinality $k$.  Its $O(k^2)$ inverse-square terms can be summed exactly as rationals and compared with the rational threshold $\tau$ in polynomial time.

\begin{proof}[Proof of Theorem~\ref{thm:main}]
Compute the planar orthogonal drawing, build the selector-center trees and objective boxes, set $b$ by \eqref{eq:b-final}, compute the exact rational field perturbations from Lemma~\ref{lem:field}, and output all four candidate points of each selector together with $k=2M$ and the exact rational threshold \eqref{eq:Tstar}.  Lemmas~\ref{lem:complete} and \ref{lem:sound} give NP-hardness.  The preceding verifier gives membership in NP.  Therefore $\RSSP(2,2)$ is NP-complete.
\end{proof}

\section{Discussion}\label{sec:discussion}

The planar reduction uses a quadrupolar selector rather than a lifted binary pair.  The two selector states have equal mass and zero dipole moment, so the first spin-spin distinction appears at order $b^4$.  For the inverse-square kernel this yields the $r^{-6}$ angular law
\[
 384\sin^2(2\theta)-160,
\]
which is negative on coordinate axes and positive at $45^\circ$.  Axis-aligned chains therefore enforce consistency, while diagonal objective boxes reproduce Barahona's antiferromagnetic source edges.

The external-field source is equally important.  Non-axis selector pairs generate one-spin terms of order $b^2r^{-4}$; rather than trying to eliminate them geometrically, the construction sums them and applies a tiny local diagonal perturbation.  At $s=2$ this compensation is especially clean because all coefficients are rational exactly.

The exponent-two result is arithmetically especially clean.  The reduction has rational coordinates, an exact rational threshold, and a polynomial-time exact verifier.  The theorem is consequently NP-completeness, although the title emphasizes the NP-hardness statement that motivates the reduction.

\paragraph{Outlook.}
The geometric mechanism is not intrinsically restricted to two ambient dimensions.  Since $\R^2$ embeds isometrically into $\R^d$ for every fixed $d>2$, the NP-hardness result at $s=2$ carries immediately to all higher fixed dimensions by appending zero coordinates.  The exponent direction is also structurally favorable.  For fixed $s>2$, the selector penalty becomes stronger relative to inter-cell interactions and the quadrupolar spin--spin tail decays as $r^{-(s+4)}$, faster than the $r^{-6}$ decay used here; the axis/diagonal sign separation of the leading multipole coefficient persists.  In particular, for fixed odd integer exponents $s>2$ the same reduction architecture preserves NP-hardness after the quantitative constants and field compensation are retuned, with the separation and tail estimates becoming easier rather than harder.  What is lost is the exact rational arithmetic special to $s=2$: for odd $s$, rational coordinates generally produce algebraic irrational pair energies.  Thus the exact rational verifier of Section~\ref{sec:encoding} no longer establishes membership in NP directly, and NP-completeness for those exponents requires a separate analysis of exact algebraic comparison.  Together with the polynomial-time result on the line for fixed $s>0$ \cite{EmmerichLine2026}, this suggests a broader complexity landscape governed jointly by dimension and exponent.

A second direction is algorithmic.  NP-hardness makes fast Riesz-energy subset-selection heuristics and approximation methods particularly important for large point sets.  Falc\'on-Cardona, Ju\'arez, M\'arquez-Vega, and Emmerich \cite{FalconCardonaJuarezMarquezVegaEmmerich2026} provide an important first scalable step: their Riesz-energy subset-selection methods use greedy inclusion and iterative replacement and are designed for near-linear practical scaling on large Pareto archives.  A natural next goal is to connect such approximation-oriented heuristics with the present complexity theory by developing provable approximation guarantees, instance-dependent lower bounds or certificates, and geometric conditions under which substantially faster algorithms are possible.  The planar and low-dimensional regimes are especially attractive for such work because they combine application relevance with additional geometric structure that may be exploitable algorithmically.

\section*{Acknowledgements}
The author thanks Ksenia Pereverdieva and André Deutz for their support during earlier attempts at this proof, including discussions of alternative source problems and reduction strategies. These explorations helped inform the development of the present approach.

\section*{Tool and computational resource disclosure}
AI-assisted tools, including ChatGPT (OpenAI) and Claude (Anthropic), were used during development of this manuscript to support adversarial proof audits, symbolic cross-checks, and editorial review.  In particular, an independent soundness audit of version~2 was carried out with Claude Fable~5 (Anthropic); it recomputed the multipole coefficients, the lattice-sum constants, the objective coefficient $\lambda$, and the component-flip loads in exact rational arithmetic, and it identified the presentational gaps corrected in the present version (the explicit U-turn sum \eqref{eq:Usum}, the explicit remainder bound in Lemma~\ref{lem:multipole}, and the certification arithmetic in Section~\ref{sec:consistency}).  The audit report and the exact-arithmetic scripts are included in the companion repository cited in Appendix~\ref{app:cube-example}.  Automated tools were used as research aids and not as sources of mathematical authority.  The author reviewed and verified the mathematical arguments, references, and presentation and retains full responsibility for the correctness and adequacy of the work.

\appendix

\section{Worked eight-vertex reduction example}\label{app:cube-example}

This appendix gives a concrete finite instance of the reduction.  Its purpose is pedagogical.  It shows how a small planar cubic source graph is expanded into selector trees, objective boxes, rational candidate points, and a Riesz threshold, and it checks the relevant energy hierarchy directly.  The example is not used in the proof of the general theorem, whose constants are deliberately much more conservative.

A companion repository is available at\\
\url{https://github.com/emmerichmtm/RieszEnergyFromIsing3DPlanarCubicReduction}.\\
It contains (i) the Python script that builds the finite instance of this appendix, recomputes all fields and couplings from the rational coordinates, and audits every consistency-tree edge; (ii) a symbolic (SymPy) derivation of the multipole coefficients of Lemma~\ref{lem:multipole}; (iii) exact-rational (\texttt{fractions}) checks of the multipole law, the remainder, the lattice-sum constants, $\lambda$, and the worst-case component-flip loads for every cut type including the objective-box U-turn; (iv) two Lean~4/Mathlib files, which are arithmetic sanity checks of displayed inequalities (\texttt{norm\_num}, \texttt{linarith}) and an exhaustive \texttt{native\_decide} enumeration of $Q_3$, not a formalization of the reduction; and (v) the independent AI soundness audit mentioned in the tool disclosure.

\paragraph{Why eight vertices rather than seven.}
A cubic graph cannot have seven vertices.  By the handshake lemma, the sum of the degrees is twice the number of edges and is therefore even, whereas a $3$-regular graph on seven vertices would have degree sum $21$.  We therefore use the cube graph $Q_3$, the smallest familiar planar cubic example with enough structure to display all parts of the construction.  Label its vertices $0,\ldots,7$ and take
\begin{align*}
E={}&\{01,12,23,30,45,56,67,74,04,15,26,37\}.
\end{align*}
The outer cycle is $0,1,2,3$, the inner cycle is $4,5,6,7$, and the remaining four edges are spokes.

\begin{figure}[htbp]
\centering
\begin{tikzpicture}[scale=0.78,
  vtx/.style={circle,draw,minimum size=7mm,inner sep=0pt,fill=white},
  ed/.style={thick}]
  \node[vtx] (v0) at (-4,-4) {0};
  \node[vtx] (v1) at ( 4,-4) {1};
  \node[vtx] (v2) at ( 4, 4) {2};
  \node[vtx] (v3) at (-4, 4) {3};
  \node[vtx] (v4) at (-1.6,-1.6) {4};
  \node[vtx] (v5) at ( 1.6,-1.6) {5};
  \node[vtx] (v6) at ( 1.6, 1.6) {6};
  \node[vtx] (v7) at (-1.6, 1.6) {7};
  \draw[ed] (v0)--(v1)--(v2)--(v3)--(v0);
  \draw[ed] (v4)--(v5)--(v6)--(v7)--(v4);
  \draw[ed] (v0)--(v4);
  \draw[ed] (v1)--(v5);
  \draw[ed] (v2)--(v6);
  \draw[ed] (v3)--(v7);
\end{tikzpicture}
\caption{The planar cubic source graph used in the worked example is the cube graph $Q_3$.  Every source vertex has degree three, so after an orthogonal drawing each source vertex is replaced by one root selector and three rectilinear branches.}
\label{fig:cube-source}
\end{figure}

\paragraph{Independent set and Ising source.}
Exhaustive enumeration of the $2^8=256$ source states gives $\alpha(Q_3)=4$.  For example, $\{0,2,5,7\}$ and $\{1,3,4,6\}$ are independent sets of size four.  Under the convention $\sigma_v=+1$ for a selected source vertex, the Barahona Hamiltonian has minimum value $-12$, attained by those two maximum independent sets, while the next energy level is $-8$.  Thus the four-unit source gap is already visible in this tiny instance.

\paragraph{Expanding one source vertex into a selector tree.}
An explicit orthogonal drawing is used for this finite example.  Each source edge is cut once in its interior.  The half-edge incident to $v$ becomes one branch of the consistency tree $T_v$.  Because $Q_3$ is cubic, each $T_v$ is only a three-leaf star with one root and three nonbranching chains, with right-angle bends where the drawing requires them.  There is no recursive branching.  The branch length records only how far the corresponding edge has to travel in the planar drawing.

The human-scale drawing used here has 392 selector cells in total.  The four outer source vertices use 61 cells each and the four inner vertices use 37 cells each.  Hence the resulting Riesz instance has
\[
 4M=1568\quad\text{candidate points},\qquad k=2M=784.
\]
Every selector center is integral.  Its four candidate points are obtained by adding the two diagonal pairs described in Section~\ref{sec:selector}; after field compensation all printed coordinates are finite decimals and therefore rational.

\paragraph{One objective box per source edge.}
For every one of the 12 source edges, choose a straight part of its orthogonal route and replace a short middle segment by the standard diagonal terminal geometry.  For readability this finite example uses objective-prefix length $D=2$.  The two terminal centers differ by a vector of the form $(\pm1,\pm1)$, so the distinguished interaction is at $45^\circ$, while the cells behind the terminals retreat along axis-aligned prefixes.  Thus the finite example has exactly the same local logic as the general construction.  Axis neighbors propagate a source spin, and a diagonal terminal pair represents a source edge.

\paragraph{A finite energy hierarchy.}
For a readable finite-scale check, take $b=1/500$ and evaluate the resulting inverse-square interactions with high precision.  This $b$ is much larger than the theorem's conservative asymptotic choice, but the finite instance can be checked directly.  The resulting values are

\begin{center}
\begin{tabular}{@{}ll@{}}
\toprule
quantity & finite example value \\
\midrule
selector cells $M$ & $392$ \\
candidate points / chosen points & $1568$ / $784$ \\
selector-forcing left side $(1/16)b^{-2}$ & $15625$ \\
selector-forcing right side $4M$ & $1568$ \\
objective coefficient $\lambda/b^4$ & $29.270918\ldots$ \\
maximum $|\eta_i|/b$ & $6.07\cdot10^{-10}$ \\
effective root fields divided by $\lambda$ & $0.99983\ldots$ to $1.00115\ldots$ \\
source-edge couplings divided by $\lambda$ & $0.98606\ldots$ to $0.98689\ldots$ \\
largest nonedge coupling divided by $\lambda$ & $1.17\cdot10^{-4}$ \\
minimum component-flip margin & $66.87\,b^4$ \\
maximum normalized remainder $|R|/\lambda$ & $0.168$ \\
\bottomrule
\end{tabular}
\end{center}

The last row exceeds the bound $|R|<\lambda/20$ of Proposition~\ref{prop:normalform}.  This is not a counterexample: the proposition assumes the asymptotic choice $b=(10^8M^3)^{-1}$ and the separation scale $D=1000\overline L^2$, whereas this instance uses $b=1/500$ and $D=2$ so that it can be drawn and inspected.  The relevant check for the finite instance is the explicit threshold separation reported below, which holds with the larger remainder.

Two points of this table are worth emphasizing.  First, the selector-forcing inequality already holds by almost an order of magnitude, so a size-$784$ subset cannot improve its energy by using an invalid occupancy or a side pair.  Second, the component-flip check is not a local heuristic.  For every consistency-tree edge, delete that edge, take the component away from the root, and sum the absolute value of \emph{all} fields and couplings that cross the resulting cut.  Even the worst cut retains more than $66b^4$ of favorable margin.  Thus every nonmonochromatic selector tree admits a strictly improving repair in this finite instance.

\paragraph{Collapsing the 392 selectors back to eight spins.}
After the preceding check, each tree can be represented by a single source spin.  Recomputing the complete effective eight-spin Hamiltonian from all Riesz pair interactions also includes the unwanted far field.  Intended graph-edge coefficients are close to the common objective scale $\lambda$, while nonedge coefficients are roughly four orders of magnitude smaller.  Over all 256 source-spin assignments the deviation from the ideal scaled Barahona energy is at most $0.168\lambda$.

For the independent-set target $K=4$, the ideal source levels are $-12$ and at least $-8$, so the midpoint scaled threshold is $-10$.  In the actual finite Riesz geometry, the two ground states have scaled energy at most
\[
 -11.8336\ldots,
\]
whereas every state from the next source level has scaled energy at least
\[
 -7.9220\ldots.
\]
The midpoint therefore remains safely separating.  All coordinates and the corresponding finite-decimal threshold are rational.

\section{Detailed derivation of the multipole coefficients}
\label{app:multipole}

This appendix gives a more detailed derivation of the two-selector
interaction formula and of the angular coefficients appearing in
Lemma~\ref{lem:multipole}.  The purpose is to make explicit how the
ordinary inverse-square Riesz interaction between individual selected
points gives rise to the effective spin interaction
\[
A_R+h_R^{(1)}\sigma+h_R^{(2)}\tau+J_R\sigma\tau,
\]
and why the orientation of the center vector $R$ enters through
$\sin(2\theta)$ and $\sin^2(2\theta)$.

\paragraph{Geometry of two selector cells.}

Consider two unperturbed selector cells with centers
\[
c_1,c_2\in\mathbb R^2,
\]
and define their center-difference vector by
\begin{equation}
\label{eq:app-Rdef}
R=c_2-c_1.
\end{equation}
Thus $R$ points from the center of the first selector to the center
of the second selector.  It is important that $R$ connects the
\emph{selector centers}; the centers themselves are not points of the
Riesz instance.

Recall that
\[
u_+=(1,1),
\qquad
u_-=(1,-1).
\]
If the first selector has spin $\sigma\in\{-1,+1\}$, its two selected
Riesz points are
\begin{equation}
\label{eq:app-xeps}
x_\epsilon=c_1+\epsilon b u_\sigma,
\qquad
\epsilon\in\{-1,+1\}.
\end{equation}
Similarly, if the second selector has spin $\tau$, its two selected
Riesz points are
\begin{equation}
\label{eq:app-yeta}
y_\eta=c_2+\eta b u_\tau,
\qquad
\eta\in\{-1,+1\}.
\end{equation}

Hence the selector center is surrounded by four candidate Riesz
points, but after selector forcing only the two points on the diagonal
specified by the spin are selected.

For a particular pair of selected points, subtracting
\eqref{eq:app-xeps} from \eqref{eq:app-yeta} gives
\begin{align}
y_\eta-x_\epsilon
&=
(c_2+\eta b u_\tau)
-
(c_1+\epsilon b u_\sigma)
\nonumber\\
&=
(c_2-c_1)
+
b(\eta u_\tau-\epsilon u_\sigma)
\nonumber\\
&=
R+b(\eta u_\tau-\epsilon u_\sigma).
\label{eq:app-displacement}
\end{align}

At exponent $s=2$, one pair of Riesz points contributes the
inverse-square potential
\[
\norm{y_\eta-x_\epsilon}^{-2}.
\]
Each selector contains two selected points, so there are four
cross-selector point pairs.  Summing their interactions gives
\[
F_R(\sigma,\tau)
=
\sum_{\epsilon,\eta\in\sgnspin}
\norm{R+b(\eta u_\tau-\epsilon u_\sigma)}^{-2},
\]
which is precisely \eqref{eq:Fst}.  Thus \eqref{eq:Fst} is simply the
ordinary Riesz interaction between the two selected diagonals, written
in a form that keeps the selector spins explicit.

\paragraph{Why the direction of $R$ produces trigonometric terms.}

Write
\begin{equation}
\label{eq:app-polarR}
r=\norm R,
\qquad
\widehat R=\frac{R}{r}
=(\cos\theta,\sin\theta),
\qquad
t=\frac br.
\end{equation}
Here $\theta$ is the angle of the center vector $R$ relative to the
positive horizontal coordinate axis.

For example,
\[
R=(1,0)
\quad\Longrightarrow\quad
\theta=0,
\]
whereas
\[
R=(1,1)
\quad\Longrightarrow\quad
\theta=\frac{\pi}{4}.
\]

The inverse-square Riesz kernel itself is radial and therefore depends
only on distance.  The angular dependence arises because the selector
is not rotationally symmetric: its two states occupy the two diagonal
directions
\[
u_+=(1,1),
\qquad
u_-=(1,-1).
\]
Taking scalar products with $\widehat R$ gives
\begin{align}
\widehat R\cdot u_+
&=\cos\theta+\sin\theta,
\label{eq:app-dotplus}\\
\widehat R\cdot u_-
&=\cos\theta-\sin\theta.
\label{eq:app-dotminus}
\end{align}
Thus the orientation of the line joining the selector centers enters
the energy through ordinary Euclidean dot products.

\paragraph{Normalization of one Riesz term.}

For fixed $\sigma,\tau,\epsilon,\eta$, put
\begin{equation}
\label{eq:app-vdef}
v=\eta u_\tau-\epsilon u_\sigma.
\end{equation}
Then one term in \eqref{eq:Fst} is
\[
\norm{R+bv}^{-2}.
\]
Expanding the squared norm gives
\begin{align}
\norm{R+bv}^{2}
&=
\norm R^2+2bR\cdot v+b^2\norm v^2
\nonumber\\
&=
r^2+2br\,\widehat R\cdot v+b^2\norm v^2
\nonumber\\
&=
r^2
\left(
1+2t\,\widehat R\cdot v+t^2\norm v^2
\right).
\label{eq:app-norm-expand}
\end{align}
Define
\begin{equation}
\label{eq:app-ac}
a=\widehat R\cdot v,
\qquad
c=\norm v^2,
\end{equation}
and
\begin{equation}
\label{eq:app-z}
z=2ta+t^2c.
\end{equation}
Then
\begin{equation}
\label{eq:app-riesz-normalized}
\norm{R+bv}^{-2}
=
r^{-2}(1+z)^{-1}.
\end{equation}

This normalization separates the overall distance scale $r$ from the
small dimensionless quantity
\[
t=\frac br.
\]
The construction chooses $b$ so small that all applications of the
multipole expansion have $t\le10^{-8}$.

Since
\[
\norm{u_\pm}=\sqrt2,
\]
we have
\[
\norm v
\le
\norm{u_\tau}+\norm{u_\sigma}
=
2\sqrt2.
\]
Consequently
\[
|a|\le2\sqrt2,
\qquad
c\le8,
\]
and hence
\[
|z|
\le
4\sqrt2\,t+8t^2
<6t
\]
in the range used in the construction.

\paragraph{Expansion through fourth order.}

Using
\[
(1+z)^{-1}
=
1-z+z^2-z^3+z^4+O(z^5)
\]
and substituting $z=2ta+t^2c$, the terms through order $t^4$ are
\begin{align}
-z
&=-2at-c t^2,
\\
z^2
&=
4a^2t^2+4ac t^3+c^2t^4,
\\
-z^3
&=
-8a^3t^3-12a^2c t^4+O(t^5),
\\
z^4
&=
16a^4t^4+O(t^5).
\end{align}
Therefore
\begin{equation}
\label{eq:app-master-expansion}
\begin{aligned}
(1+z)^{-1}
={}&
1
-2a t
+(4a^2-c)t^2
\\
&+(4ac-8a^3)t^3
\\
&+(c^2-12a^2c+16a^4)t^4
+O(t^5).
\end{aligned}
\end{equation}

Equation~\eqref{eq:app-master-expansion} is the basic local expansion
from which both the one-spin fields and the spin-spin coefficient are
obtained.

\paragraph{Fourier projection onto the binary spins.}

For fixed $R$, the quantity $F_R(\sigma,\tau)$ is a function of two
binary variables.  The four functions
\[
1,\qquad \sigma,\qquad \tau,\qquad \sigma\tau
\]
form a basis for all real-valued functions on
$\{-1,+1\}^2$.  Hence
\[
F_R(\sigma,\tau)
=
A_R+h_R^{(1)}\sigma+h_R^{(2)}\tau+J_R\sigma\tau.
\]
The coefficient $J_R$ is the projection onto the basis function
$\sigma\tau$:
\begin{equation}
\label{eq:app-PJ}
J_R
=
\frac14
\sum_{\sigma,\tau\in\sgnspin}
\sigma\tau\,F_R(\sigma,\tau).
\end{equation}
Similarly,
\begin{align}
h_R^{(1)}
&=
\frac14
\sum_{\sigma,\tau\in\sgnspin}
\sigma\,F_R(\sigma,\tau),
\label{eq:app-Ph1}\\
h_R^{(2)}
&=
\frac14
\sum_{\sigma,\tau\in\sgnspin}
\tau\,F_R(\sigma,\tau).
\label{eq:app-Ph2}
\end{align}

For each fixed pair $(\sigma,\tau)$, $F_R$ contains four choices of
$(\epsilon,\eta)$.  The Fourier projection then sums over the four
spin states $(\sigma,\tau)$.  Thus the coefficient calculation
contains sixteen elementary point-pair terms in total.

For the remaining algebra, abbreviate
\[
x=\cos\theta,
\qquad
y=\sin\theta.
\]
Then
\[
x^2+y^2=1.
\]

Substitute
\[
v=\eta u_\tau-\epsilon u_\sigma
\]
into $a=\widehat R\cdot v$ and $c=\norm v^2$, insert these quantities
into \eqref{eq:app-master-expansion}, and perform the finite sums over
\[
\sigma,\tau,\epsilon,\eta\in\{-1,+1\}.
\]
The odd-order terms cancel under the Fourier projections.  For the
spin-spin coefficient, the order-$t^2$ term also cancels:
\[
[t^2]\,J_R/r^{-2}=0.
\]
Thus the first spin-spin distinction occurs at order $t^4$.

It is useful to display explicitly where the fourth-order coefficient
comes from.  The three terms in
\[
c^2-12a^2c+16a^4
\]
contribute, after the $J_R$ projection,
\begin{align}
\mathcal P_J[c^2]
&=32,
\label{eq:app-Jpiece1}\\
\mathcal P_J[-12a^2c]
&=-192(x^2+y^2),
\label{eq:app-Jpiece2}\\
\mathcal P_J[16a^4]
&=1536x^2y^2.
\label{eq:app-Jpiece3}
\end{align}
Since $x^2+y^2=1$, their sum is
\begin{align}
[t^4]\,J_R/r^{-2}
&=
1536x^2y^2-192+32
\nonumber\\
&=
1536x^2y^2-160.
\label{eq:app-Jxy}
\end{align}

This makes the origin of the integer $-160$ transparent:
\[
-160=32-192.
\]
The coefficient is exact; it is not a numerical approximation.  It
arises from the finite sum over the binary selector and point signs.

For the one-spin coefficients, the order-$t^2$ projection gives
\begin{equation}
\label{eq:app-h2}
[t^2]\,h_R^{(i)}/r^{-2}
=
32xy,
\qquad i=1,2.
\end{equation}
At fourth order,
\begin{align}
[t^4]\,h_R^{(i)}/r^{-2}
&=
1024x^3y+1024xy^3-768xy
\nonumber\\
&=
1024xy(x^2+y^2)-768xy
\nonumber\\
&=
256xy.
\label{eq:app-h4}
\end{align}
The equality of the two one-spin coefficients follows from the
symmetry between the two selectors.

Combining the preceding calculations gives
\begin{align}
J_R/r^{-2}
&=
\left(1536x^2y^2-160\right)t^4
+O(t^5),
\label{eq:app-Jt}\\
h_R^{(i)}/r^{-2}
&=
32xy\,t^2
+
256xy\,t^4
+O(t^5).
\label{eq:app-ht}
\end{align}

\paragraph{Conversion to the angular formulas.}

The standard double-angle identities
\[
2\sin\theta\cos\theta=\sin(2\theta)
\]
and
\[
4\sin^2\theta\cos^2\theta=\sin^2(2\theta)
\]
give
\[
2xy=\sin(2\theta),
\qquad
4x^2y^2=\sin^2(2\theta).
\]
Therefore
\begin{align}
1536x^2y^2-160
&=
384\sin^2(2\theta)-160,
\label{eq:app-Jtrig}\\
32xy
&=
16\sin(2\theta),
\label{eq:app-htrig2}\\
256xy
&=
128\sin(2\theta).
\label{eq:app-htrig4}
\end{align}

Finally,
\[
r^{-2}t^2
=
r^{-2}\left(\frac br\right)^2
=
b^2r^{-4},
\]
whereas
\[
r^{-2}t^4
=
r^{-2}\left(\frac br\right)^4
=
b^4r^{-6}.
\]
Consequently,
\begin{align}
J_R
&=
\bigl(384\sin^2(2\theta)-160\bigr)
b^4r^{-6}
+
O(b^5r^{-7}),
\label{eq:app-J-final}\\
h_R^{(1)}=h_R^{(2)}
&=
16\sin(2\theta)b^2r^{-4}
+
128\sin(2\theta)b^4r^{-6}
+
O(b^5r^{-7}).
\label{eq:app-h-final}
\end{align}
These are the leading terms in
\eqref{eq:Jangle} and \eqref{eq:hangle}.

The proof of Lemma~\ref{lem:multipole} uses the uniform estimate
$|z|<6t$, the geometric-series tail $\abs\rho<7777t^5$, and the explicit
degree-five-to-eight pieces of $z^3$ and $z^4$ to replace the
$O(b^5r^{-7})$ notation by the explicit conservative bound
\[
\abs{\mathcal R_J}+\abs{\mathcal R_h}
<1.2\cdot10^{5}\,t^5r^{-2}
\le
10^6 b^5r^{-7}.
\]
The large numerical constant is used only to obtain a simple uniform
bound; the coefficients $384$, $160$, $16$, and $128$ above are exact.
Since $F_R$ is even in $t$, the remainder is in fact $O(t^6r^{-2})$; an
exact-rational evaluation of \eqref{eq:Fst} at $b=10^{-3}$ and $10^{-4}$
for the center vectors used in the construction gives a coefficient below
$4\cdot10^{3}$ in units of $t^6r^{-2}$, attained at $45^\circ$.

\paragraph{Geometric interpretation of the angular dependence.}

The spin-spin coefficient has leading angular factor
\[
384\sin^2(2\theta)-160.
\]
Its significance becomes particularly clear in the two orientations
used by the reduction.

For an axis-aligned pair of selector centers,
\[
\theta=0
\quad\text{or}\quad
\theta=\frac{\pi}{2},
\]
so
\[
\sin(2\theta)=0.
\]
Hence
\[
J_R
=
-160b^4r^{-6}
+
\mathcal R_J.
\]
The coupling is therefore negative.  Since the corresponding Ising
term is
\[
J_R\sigma\tau,
\]
a negative coefficient favors
\[
\sigma\tau=+1,
\]
that is, equal selector spins.  This is the ferromagnetic interaction
used to propagate a common spin along the axis-aligned consistency
trees.

At an angle of $45^\circ$,
\[
\theta=\frac{\pi}{4},
\]
so
\[
\sin^2(2\theta)
=
\sin^2\left(\frac{\pi}{2}\right)
=
1.
\]
The exact leading angular coefficient is therefore
\[
384-160=224.
\]
Thus
\[
J_R
=
224b^4r^{-6}
+
\mathcal R_J,
\]
which is positive.  A positive coefficient favors
\[
\sigma\tau=-1,
\]
and hence opposite selector spins.  This is the antiferromagnetic
interaction used inside an objective box to encode a source edge.

Notice also that
\[
0\le\sin^2(2\theta)\le1,
\]
so $45^\circ$ maximizes the positive part of the leading angular
coefficient.  It therefore gives the strongest positive interaction
available from this angular law.

For the distinguished terminal pair in the standard objective box,
\[
R=(1,1),
\qquad
r=\sqrt2,
\qquad
\theta=\frac{\pi}{4}.
\]
Since
\[
r^{-6}=(\sqrt2)^{-6}=\frac18,
\]
the leading contribution is
\[
224b^4r^{-6}
=
28b^4.
\]
The uniform remainder estimate then leaves a positive interaction
slightly below this value, which is the quantitative input used later
in Lemma~\ref{lem:objective}.

The same calculation also explains the different orders of the two
effective terms.  The one-spin field begins at order
\[
b^2r^{-4},
\]
whereas the spin-spin distinction begins only at order
\[
b^4r^{-6}.
\]
The latter cancellation reflects the balanced geometry of the
selector.  Both states have the same number of selected points and
zero dipole moment.  The first state-dependent interaction between
two such balanced selectors is therefore quadrupolar, which is why
the spin-spin term appears two orders later.

In summary, the derivation starts from the individual inverse-square interactions and rewrites each relative displacement as $R+b(\eta u_\tau-\epsilon u_\sigma)$.  Writing the direction of $R$ as $\widehat R=(\cos\theta,\sin\theta)$ isolates the angular dependence.  Expanding in the small parameter $t=b/r$ then gives a finite polynomial through fourth order, and Fourier projection onto $1$, $\sigma$, $\tau$, and $\sigma\tau$ extracts the spin-dependent coefficients.  The spin-spin projection yields $J_R=\bigl(384\sin^2(2\theta)-160\bigr)b^4r^{-6}+\mathcal R_J$.  The angular sign change in this coefficient is the feature used by the reduction.
Axis-aligned pairs propagate a spin consistently, while the
$45^\circ$ terminal geometry in an objective box implements the
positive antiferromagnetic source-edge interaction.

\section{Table of frequently used symbols}\label{app:symbols}

The table lists the symbols that recur across several sections, with a brief explanation and the place where each is introduced.  Symbols that are used only locally inside one derivation (for instance the auxiliary quantities $a$, $c$, $z$, $v$, $x=\cos\theta$, $y=\sin\theta$ of the multipole expansion) are explained where they occur.

\begin{center}
\small
\begin{tabular}{@{}>{$}l<{$}p{7.6cm}l@{}}
\toprule
\text{Symbol} & \text{Meaning} & Introduced \\
\midrule
\multicolumn{3}{@{}l}{\emph{Source problem}}\\
G=(V,E),\ n,\ m & planar cubic source graph, $n=|V|$, $m=|E|=3n/2$ & Sec.~\ref{sec:source} \\
K,\ \alpha(G) & independent-set target; independence number & Sec.~\ref{sec:source} \\
\sigma_v\in\{-1,+1\} & Ising spin of source vertex $v$ ($+1$ = selected) & Sec.~\ref{sec:source} \\
H_B(\sigma) & Barahona Hamiltonian $\sum_{uv\in E}\sigma_u\sigma_v+\sum_v\sigma_v$ & Eq.~\eqref{eq:HB} \\
t,\ q & number of selected vertices; number of induced edges & Sec.~\ref{sec:source} \\
H_0 & source threshold $n/2-4K$; the yes/no gap is $4$ & Eq.~\eqref{eq:H0} \\
\midrule
\multicolumn{3}{@{}l}{\emph{Target problem and selector cells}}\\
E_s(S) & Riesz $s$-energy of a point set $S$; here $s=2$ & Sec.~\ref{sec:intro}, Thm.~\ref{thm:main} \\
P,\ k,\ \tau & constructed point set, subset cardinality, rational threshold & Thm.~\ref{thm:main}, Eq.~\eqref{eq:Tstar} \\
c,\ b & selector center; selector scale (half-side of the candidate square) & Sec.~\ref{sec:selector} \\
S_+(c),\ S_-(c) & the two diagonal states of a selector, spins $+1$ and $-1$ & Sec.~\ref{sec:selector} \\
I_j & minimum internal energy of $j$ selected corners of one cell & Sec.~\ref{sec:selector} \\
\Delta_{\rm sel},\ c_{\rm sel} & side-versus-diagonal penalty $c_{\rm sel}b^{-2}$, $c_{\rm sel}=1/8$ & Eq.~\eqref{eq:csel} \\
M & total number of selector cells; $k=2M$ & Eq.~\eqref{eq:k2M} \\
\eta_i,\ \eta_i^* & perturbation of the $+$ diagonal radius of cell $i$; its first-order value & Sec.~\ref{sec:field}, Eq.~\eqref{eq:eta-star} \\
\midrule
\multicolumn{3}{@{}l}{\emph{Two-selector interaction}}\\
u_+,\ u_- & diagonal directions $(1,1)$ and $(1,-1)$ & Sec.~\ref{sec:multipole} \\
R,\ r,\ \theta & center-difference vector, its length, its angle to the $x$-axis & Sec.~\ref{sec:multipole} \\
t=b/r & small expansion parameter (distinct from the vertex count $t$ above) & Sec.~\ref{sec:multipole} \\
F_R(\sigma,\tau) & exact four-pair interaction of two selectors & Eq.~\eqref{eq:Fst} \\
A_R,\ h_R^{(1)},\ h_R^{(2)},\ J_R & Fourier coefficients: constant, one-spin fields, spin-spin coupling & Eq.~\eqref{eq:fourier-coeff} \\
\mathcal R_J,\ \mathcal R_h & multipole remainders, $|\mathcal R_J|+|\mathcal R_h|\le10^6b^5r^{-7}$ & Lemma~\ref{lem:multipole} \\
J_{\rm nn} & coupling of axis-aligned unit neighbors, $<-159b^4$ & Eq.~\eqref{eq:wire-main} \\
\widetilde J_{ij},\ \widetilde G_i & couplings and fields of the perturbed instance & Lemma~\ref{lem:field} \\
\bottomrule
\end{tabular}
\end{center}

\begin{center}
\small
\begin{tabular}{@{}>{$}l<{$}p{7.6cm}l@{}}
\toprule
\text{Symbol} & \text{Meaning} & Introduced \\
\midrule
\multicolumn{3}{@{}l}{\emph{Field compensation}}\\
B(\eta),\ \beta & internal one-spin coefficient of a perturbed cell; $\beta=B'(0)=-b^{-3}/8$ & Eqs.~\eqref{eq:B-eta}--\eqref{eq:beta} \\
G_i & complete unperturbed one-spin field at cell $i$ & Sec.~\ref{sec:field} \\
g_i & target field: $\lambda$ at tree roots, $0$ elsewhere & Eq.~\eqref{eq:targetfields} \\
\lambda & objective coefficient, $19b^4<\lambda<37b^4$ & Eqs.~\eqref{eq:lambda-def}, \eqref{eq:lambda-range} \\
\midrule
\multicolumn{3}{@{}l}{\emph{Routing and gadgets}}\\
L_0,\ \overline L & Manhattan length of the coarse drawing; $\overline L=L_0+n+1$ & Lemma~\ref{lem:orthogonal}, Eq.~\eqref{eq:Lbar} \\
S,\ D & grid scaling factor $10^5\overline L^2$; separation scale $S/100$ & Eq.~\eqref{eq:Lbar} \\
T_v & consistency tree (three-leaf star of selector cells) of source vertex $v$ & Sec.~\ref{sec:routing} \\
p_i,\ q_j & selector-center prefixes of the two trees inside an objective box & Eq.~\eqref{eq:obj-rays} \\
\midrule
\multicolumn{3}{@{}l}{\emph{Energy bookkeeping}}\\
\Lambda_{\rm straight},\ \Lambda^{\rm local}_{\rm tree} & cross-cut coupling loads: straight tail; local same-tree & Eqs.~\eqref{eq:straight-tail}, \eqref{eq:localtree-round} \\
\Lambda_{\rm rem} & total load of all remote (non-local) pairs, $<10^{-3}b^4$ & Eq.~\eqref{eq:remote} \\
Q_6,\ P_6,\ U_6 & lattice sums for a right angle, opposite rays, and the objective-box U-turn & Eqs.~\eqref{eq:Qsum}, \eqref{eq:Psum}, \eqref{eq:Usum} \\
\text{(C1)--(C5)} & classes of terms of a normalized configuration & Sec.~\ref{sec:normalized} \\
E_0,\ R(\sigma) & state-independent baseline; remainder with $|R(\sigma)|<\lambda/20$ & Prop.~\ref{prop:normalform} \\
\bottomrule
\end{tabular}
\end{center}

\end{document}